\documentclass[11pt]{article}

\usepackage{tabularx,booktabs,multirow,delarray,array}
\usepackage{graphicx,amssymb,amsmath,amssymb}
\usepackage{enumerate}
\usepackage{algorithmic}
\usepackage[ruled]{algorithm2e}
\usepackage{wrapfig}
\usepackage{latexsym}
\usepackage{lineno}
\usepackage{lscape}
\usepackage{fullpage}
\usepackage{bibspacing}
\setlength{\bibspacing}{\baselineskip}

\newcommand{\ignore}[1]{ }

\newcounter{rem}
\setcounter{rem}{0}


\def\calP{\mathcal{P}}

\def\calF{\mathcal{F}}

\def\qed{\hbox{\rlap{$\sqcap$}$\sqcup$}}
\newcommand{\Tri}{\mbox{$T\!r\!i$}}

\def\qed{\hbox{\rlap{$\sqcap$}$\sqcup$}}
\newtheorem{theorem}{Theorem}[section]
\newtheorem{lemma}{Lemma}[section]

\newenvironment{proof}{\par\noindent{\bf Proof:}}{\mbox{}\hfill$\qed$\\}
\newtheorem{observation}{Observation}


\begin{document}

\title{Approximate Euclidean shortest paths in polygonal domains} 
\author{
R. Inkulu\footnote{Department of Computer Science \& Engg, IIT Guwahati.  E-mail: rinkulu@iitg.ac.in}
\and
Sanjiv Kapoor\footnote{Department of Computer Science, Illinois Institute of Technology, Chicago, USA.  E-mail: kapoor@iit.edu}
}
\date{}
\maketitle

\maketitle

\begin{abstract}
Given a set $\mathcal{P}$ of $h$ pairwise disjoint simple polygonal obstacles in $\mathbb{R}^2$ defined with $n$ vertices, we compute a sketch $\Omega$ of $\mathcal{P}$ whose size is independent of $n$, depending only on $h$ and the input parameter $\epsilon$. 
We utilize $\Omega$ to compute a $(1+\epsilon)$-approximate geodesic shortest path between the two given points in $O(n + h((\lg{n}) + (\lg{h})^{1+\delta} + (\frac{1}{\epsilon}\lg{\frac{h}{\epsilon}})))$ time.
Here, $\epsilon$ is a user parameter, and $\delta$ is a small positive constant (resulting from the time for triangulating the free space of $\cal P$ using the algorithm in \cite{journals/ijcga/Bar-YehudaC94}).
Moreover, we devise a $(2+\epsilon)$-approximation algorithm to answer two-point Euclidean distance queries for the case of convex polygonal obstacles.
\end{abstract}

\section{Introduction}
\label{sect:intro}

For any set $\cal{Q}$ of pairwise-disjoint simple polygonal obstacles in $\mathbb{R}^2$, the free space $\mathcal{F(Q)}$ is the closure of $\mathbb{R}^2$ without the union of the interior of all the polygons in $\cal{Q}$.
Given a set $\mathcal{P} = \{P_1, P_2, \ldots, P_h\}$ of pairwise-disjoint simple polygonal obstacles in $\mathbb{R}^2$ and two points $s$ and $t$ in $\mathcal{F(P)}$, the {\it Euclidean shortest path finding problem} seeks to compute a shortest path between $s$ and $t$ that lies in $\mathcal{F(P)}$. 
This problem is well-known in the computational geometry community.
Mitchell~\cite{books/compgeom/TothRG17,coll/hb/Mitch00} provides an extensive survey of research accomplished in determining shortest paths in polygonal and polyhedral domains.
The problem of finding shortest paths in graphs is quite popular and considered to be fundamental. 
Especially, several algorithms for efficiently computing single-source shortest paths and all-pairs shortest paths are presented in Cormen et al.~\cite{books/algo/Cormen09} and Kleinberg and Tardos ~\cite{books/algo/KleinTard05}) texts.
And, the algorithms for approximate shortest paths are surveyed in \cite{conf/walcom/SSen09}.
In the following, we assume that $n$ vertices together define the $h$ polygonal obstacles of $\mathcal{P}$.

Given a polygonal domain $\mathcal{P}$ as input, the following are three well-known variants of the Euclidean shortest path finding problem: (i) both $s$ and $t$ are given as input with $\mathcal{P}$, (ii) only $s$ is provided as input with $\mathcal{P}$, and (iii) neither $s$ nor $t$ is given as input.
The type (i) problem is a single-shot problem and involves no preprocessing. 
The preprocessing phase of the algorithm for a type (ii) problem constructs a shortest path map with $s$ as the source so that a shortest path between $s$ and any given query point $t$ can be found efficiently.
In the third variation, which is known as a two-point shortest path query problem, the polygonal domain $\mathcal{P}$ is preprocessed to construct data structures that facilitate in answering shortest path queries between any given pair of query points $s$ and $t$.

In solving a type (i) or type (ii) problem, there are two fundamentally different approaches: the visibility graph method (see Ghosh~\cite{books/visalgo/skghosh2007} for both the survey and details of various visibility algorithms) and the continuous Dijkstra (wavefront propagation) method.
The visibility graph method \cite{journals/talg/ChenW15,journals/siamcomp/KapoorM00,journals/dcg/KapoorMM97,journals/ipl/Welzl85} is based on constructing a graph $G$, termed visibility graph, whose nodes are the vertices of the obstacles (together with $s$ and $t$) and edges are the pairs of mutually visible vertices.
Once the visibility graph $G$ is available, a shortest path between $s$ and $t$ in $G$ is found using Dijkstra's algorithm.
As the number of edges in the visibility graph is $O(n^2)$, this method has worst-case quadratic time complexity.
In the continuous Dijkstra approach \cite{journals/siamcomp/HershbergerS99,journals/corr/InkuluKM10,conf/stoc/Kapoor99,journals/ijcga/Mitchell96}, a wavefront is expanded from $s$ till it reaches $t$.
In specific, for the case of polygonal obstacles in plane, Hershberger and Suri devised an algorithm in \cite{journals/siamcomp/HershbergerS99} which computes a shortest path in $O(n\lg{n})$ time and the algorithm in \cite{journals/corr/InkuluKM10} (which extends the algorithm by Kapoor \cite{conf/stoc/Kapoor99}) by Inkulu, Kapoor, and Maheshwari computes a shortest path in $O(n+h((\lg{h})^{\delta}+(\lg{n})(\lg{h})))$ time.
Here, $\delta$ is a small positive constant (resulting from the time for triangulating the $\cal{F(P)}$ using the algorithm in \cite{journals/ijcga/Bar-YehudaC94}). 
The continuous Dijkstra method typically constructs a shortest path map with respect to $s$ so that for any query point $t$, a shortest path from $s$ to $t$ can be found efficiently. 

The two-point shortest path query problem within a given simple polygon was addressed by Guibas and Hershberger~\cite{journals/jcss/GuibasH89}.
It preprocessed the simple polygon in $O(n)$ time and constructed a data structure of size $O(n)$ and answers two-point shortest distance queries in $O(\lg{n})$ time.
Exact two-point shortest path queries in the polygonal domain were explored by Chiang and Mitchell~\cite{conf/soda/ChiangM99}.
One of the algorithms in \cite{conf/soda/ChiangM99} constructs data structures of size $O(n^5)$ and answers the query for any two-point distance in $O(h + \lg{n})$ worst-case time.
And, another algorithm in \cite{conf/soda/ChiangM99} builds data structures of size $O(n+h^5)$ and outputs any two-point distance query in $O(h\lg{n})$ time.
In both of these algorithms, a shortest path itself is found in additional time $O(k)$, where $k$ is the number of edges in the output path.
Guo et~al.~\cite{conf/aaim/GuoMS08} preprocessed $\cal{F(P)}$ in $O(n^2\lg{n})$ time to compute data structures of size $O(n^2)$ for answering two-point distance queries for any given pair of query points in $O(h\lg{n})$ time.

Because of the difficulty of exact two-point queries in polygonal domains, various approximation algorithms were devised.
Clarkson first made such an attempt in \cite{conf/stoc/Clarkson87}.
Chen~\cite{conf/soda/Chen95} used the techniques from \cite{conf/stoc/Clarkson87} in constructing data structures of size $O(n\lg{n} + \frac{n}{\epsilon})$ in $o(n^{3/2})+O(\frac{n}{\epsilon}\lg{n})$ time to support $(6+\epsilon)$-approximate two-point distance queries in $O(\frac{1}{\epsilon}\lg{n}+\frac{1}{\epsilon^2})$ time, and a shortest path in additional $O(L)$ time, where $L$ is the number of edges of the output path.
Arikati et~al.~\cite{conf/esa/ArikatiCCDSZ96} devised a family of algorithms to answer two-point approximate shortest path queries. 
Their first algorithm outputs a $(\sqrt{2}+\epsilon)$-approximate distance; depending on a parameter $1 \le r \le n$, in the worst-case, either the preprocessed data structures of this algorithm take $O(n^2)$ space or the query time is $O(\sqrt{n})$. 
Their second algorithm takes $O(n)$ query time to report the distance.
The stretch of the third and fourth algorithms proposed in \cite{conf/esa/ArikatiCCDSZ96} are respectively  $(2\sqrt{2}+\epsilon)$ and $(3\sqrt{2}+\epsilon)$.
Agarwal et~al.~\cite{conf/soda/AgarwalSY09} computes a $(1+\epsilon)$-approximate geodesic shortest path in $O(n + \frac{h}{\sqrt{\epsilon}}\lg(\frac{h}{\epsilon}))$ time when the obstacles are convex.

Throughout this paper, to distinguish graph vertices from the vertices of the polygonal domain, we refer to vertices of a graph as nodes.
The Euclidean distance between any two points $p$ and $q$ is denoted with $\Vert pq \Vert$.
The obstacle-avoiding geodesic Euclidean shortest path distance between any two points $p, q$ amid a set $\cal{Q}$ of obstacles is denoted with $dist_{\mathcal{Q}}(p, q)$.
The (shortest) distance between two nodes $s$ and $t$ in a graph $G$ is denoted with $dist_G(s, t)$.
Unless specified otherwise, distance is measured in Euclidean metric.
We denote both the convex hull of a set $R$ of points and the convex hull of a simple polygon $R$ with $CH(R)$.
Let $r'$ and $r''$ be two rays with origin at $p$.
Let $\overrightarrow{v_1}$ and $\overrightarrow{v_2}$ be the unit vectors along the rays $r'$ and $r''$ respectively. 
A {\it cone} $C_p(r', r'')$ is the set of points defined by rays $r'$ and $r''$ such that a point $q \in C_p(r', r'')$ if and only if $q$ can be expressed as a convex combination of the vectors $\overrightarrow{v_1}$ and $\overrightarrow{v_2}$ with positive coefficients.
When the rays are evident from the context, we denote the cone with $C_p$.
The counterclockwise angle from the positive x-axis to the line that bisects the cone angle of $C_p$ is termed as the {\it orientation of the cone $C_p$}.

\subsection*{Our contributions}
\label{subsect:contrib}

First, we describe the algorithm for the case in which $\calP$ comprises convex polygonal obstacles.
We compute a sketch $\Omega$ from the polygonal domain $\calP$.
Essentially, each convex polygonal obstacle $P_i$ in $\calP$ is approximated with another convex polygonal obstacle whose complexity depends only on the input parameter $\epsilon$; significantly, the size of the approximated polygon is independent of the size of $P_i$.
In specific, when $\calP$ is comprised of $h$ convex polygonal obstacles, the sketch $\Omega$ is comprised of $h$ convex polygonal obstacles: for each $1 \le i \le h$, the convex polygon $P_i \in \cal{P}$ is approximated with another convex polygon $Q_i \in \Omega$.
For each $P_i \in \cal{P}$, we identify a coreset $S_i$ of vertices of $P_i$ and form the core-polygon $Q_i \in \Omega$ using $S_i$.
When $P_i$ is convex, the corresponding core-polygon $Q_i$ obtained through this procedure is convex; and, $Q_i \subseteq P_i$.
Like in \cite{conf/soda/AgarwalSY09}, the combinatorial complexity of $\Omega$ is independent of $n$; it depends only on $h$ and the input parameter $\epsilon$.
For two points $s, t \in \cal{F(P)}$, we compute an approximate Euclidean shortest path between $s$ and $t$ in ${\cal F}(\Omega)$ using an algorithm that is a variant of \cite{conf/stoc/Clarkson87}.
From this path, we compute a path $R$ in $\cal{F(P)}$ and show that $R$ is a $(1+\epsilon)$-approximate Euclidean shortest path between $s$ and $t$ amid polygonal obstacles in $\cal{P}$.
When the obstacles in $\calP$ are not necessarily convex, we compute the sketch of $\calP$ using the convex chains (that bound the obstacles) as well as the corridor paths that result from the hourglass decomposition \cite{conf/stoc/Kapoor99,conf/socg/Kapoor88,journals/dcg/KapoorMM97} of $\cal{F(P)}$.
The main contributions and the major advantages in our approach are described in the following:
\begin{itemize}
\item
When $\cal{P}$ is comprised of disjoint simple polygonal obstacles, we compute a $(1+\epsilon)$-approximate geodesic Euclidean shortest path between the two given points belonging to $\cal{F(P)}$ in $O(n + h((\lg{n}) + (\lg{h})^{1+\delta} + \frac{1}{\epsilon}\lg{\frac{h}{\epsilon}}))$ time.
Here, $\delta$ is a small positive constant resulting from the triangulation of the free space using the algorithm from \cite{journals/ijcga/Bar-YehudaC94}.
(Refer to Theorem~\ref{thm:sppolydom}.)
Agarwal et~al.~\cite{conf/soda/AgarwalSY09} compute a $(1+\epsilon)$-approximate geodesic shortest path in $O(n + \frac{h}{\sqrt{\epsilon}}\lg(\frac{h}{\epsilon}))$ time when the obstacles are convex.
In computing approximate shortest paths, our algorithm extends the notion of coresets in \cite{conf/soda/AgarwalSY09} to simple polygons.
However, our approach is computing coresets, and an approximate shortest path using these coresets is quite different from \cite{conf/soda/AgarwalSY09}.
Our algorithm to construct the sketch of $\calP$ is simpler.

\item
As part of devising the above algorithm, when $\cal{P}$ is comprised of convex polygonal obstacles, our algorithm computes a $(1+\epsilon)$-approximate geodesic Euclidean distance between the two given points in $O(n+ \frac{h}{\epsilon}\lg{\frac{h}{\epsilon}})$ time.
Further, our algorithm computes a $(1+\epsilon)$-approximate shortest path in additional $O(h\lg{n})$ time.
(Refer to Theorem~\ref{thm:spcvx}.)

\item
When $\cal{P}$ is comprised of disjoint convex polygonal obstacles, we preprocess these polygons in $O(n+\frac{h}{\epsilon^2}(\lg{\frac{h}{\epsilon}})+\frac{h}{\epsilon}(\lg{\frac{h}{\epsilon}})^2)$ time to construct data structures of size $O(\frac{h}{\epsilon})$ for answering any two-point $(2+\epsilon)$-approximate geodesic distance (length) query in $O(\frac{1}{\epsilon^6}(\lg{\frac{h}{\epsilon}})^2)$ time.
(Refer to Theorem~\ref{thm:distq}.)
To compute an optimal geodesic shortest path amid simple polygonal obstacles, Chen and Wang~\cite{journals/talg/ChenW15} takes $O(n+h\lg{h}+k)$ time, where $k$ is a parameter sensitive to the geometric structures of the input and is upper bounded by $O(h^2)$.
Our algorithm to answer approximate two-point distance queries amid convex polygonal obstacles takes space close to linear in $n$ whereas the preprocessed data structures of algorithms proposed in \cite{conf/soda/ChiangM99} occupy $\Omega(n^5)$ space in the worst-case.
Also, our algorithm for two-point distance queries improves the stretch factor of \cite{conf/soda/Chen95} from $(6+\epsilon)$ to $(2+\epsilon)$ in case of convex polygonal obstacles.

\item
Furthermore, our algorithm to compute the coreset of simple polygons to obtain a sketch $\Omega$ of $\cal{P}$ as well as the algorithm to compute an approximate geodesic Euclidean shortest path in $\calP$ using the sketch $\Omega$ may be of independent interest.
\end{itemize}

Section~\ref{sect:r2nopreproc} describes an algorithm for computing a single-shot approximate shortest path when obstacles in $\calP$ are convex polygons.
Section~\ref{sect:polydom} extends this algorithm to compute an approximate Euclidean shortest path amid simple polygonal obstacles. 
The algorithm to answer two-point approximate Euclidean distance queries amid convex polygonal obstacles is described in Section~\ref{sect:r2preproc}.
A table comparing earlier algorithms to ours is given in the Appendix.

\section{Approximate shortest path amid convex polygons}
\label{sect:r2nopreproc}

In this section, we consider the case in which every simple polygon in $\cal{P}$ is convex. 
We use the following notation from Yao~\cite{journals/siamcomp/Yao82}.
Let $\kappa \ge 2$, and define $\theta = 2\pi/\kappa$.
Consider the set of $\kappa$ rays: for $0 \le i \kappa$, the ray $r_i$ passes through the origin and makes an angle $i\theta$ with the positive $x$-axis.
Each pair of successive rays defines a cone whose apex is at the origin. 
This collection of $\kappa$ cones is denoted by $\mathcal{C}$.
It is clear that the cones of $\mathcal{C}$ partition the plane.
Also, the two bounding rays of any cone of $\mathcal{C}$ make an angle $\theta$.
In our algorithm, the value of $\kappa$ is chosen as a function of $\epsilon$ (refer to Subsection~\ref{subsect:gcon}).
When a cone $C \in \mathcal{C}$ is translated to have the apex at a point $p$, the translated cone is denoted with $C_p$.
Each cone that we refer in this paper is a translated copy of some cone in $\mathcal{C}$.
For each polygon $P$ in $\mathcal{P}$, we choose a subset of $O(\frac{1}{\sqrt{\alpha\epsilon}})$ vertices from the vertices of $P$.  
At each such vertex $p$, we introduce a set of cones at $p$.
In the algorithm to compute a single-shot $s$-$t$ geodesic shortest path, the value of $\alpha$ is set to $\frac{\epsilon}{2}$.
The algorithm for two-point approximate distance queries sets the value of $\alpha$ to $\frac{\epsilon}{12}$. 
The proof of Theorem~\ref{thm:spcvx} details the reasons for setting these specific values.
As detailed below, these vertices and cones help in computing a spanner that approximates a Euclidean shortest path between the two given points in $\mathcal{F(P)}$.

\subsection{Sketch of $\mathcal{P}$}

In this subsection, we define and characterise the sketch of $\cal{P}$. 
For any $P_i \in \calP$ and any two points $p'$ and $p''$ on the boundary of $P_i$, the section of boundary of $P_i$ that occurs while traversing from $p'$ to $p''$ in counterclockwise order is termed a {\it patch} of $P_i$. 
In specific, we partition the boundary of each $P_i \in \cal{P}$ into a collection of patches $\Gamma_i$ such that for any two points $p', p''$ belonging to any patch $\gamma \in \Gamma_i$, the angle between the outward (w.r.t. the centre of $P_i$) normals to respective edges at $p'$ and $p''$ is upper bounded by $\frac{\sqrt{\alpha\epsilon}}{2}$.
The maximum angle between the outward normals to any two edges that belong to a patch $\gamma$ constructed in our algorithm is the {\it angle subtended by $\gamma$}.
To facilitate in computing patches of any obstacle $P_i$, we partition the unit circle $\mathbb{S}^2$ centred at the origin into a minimum number of segments such that each circular segment is of length at most $\frac{\sqrt{\alpha\epsilon}}{2}$.
For every such segment $s$ of $\mathbb{S}^2$, a patch (corresponding to $s$) comprises of the maximal set of the contiguous sequence of edges of $P_i$ whose outward normals intersect $s$, when each of these normals is translated to the origin.
(To avoid degeneracies, we assume each normal intersects a single segment.)
Let $\Gamma_i$ be a partition of the boundary of a convex polygon $P_i$ into a collection of $O(\frac{1}{\sqrt{\alpha\epsilon}})$ patches.
The lemma below shows that the geodesic distance between any two points belonging to any patch $\gamma \in \Gamma_i$ is a $(1+\alpha\epsilon)$-approximation to the Euclidean distance between them.

\begin{lemma}
\label{lem:patchdist}
For any two points $p$ and $q$ that belong to any patch $\gamma \in \Gamma_j$, the geodesic distance between $p$ and $q$ along $\gamma$ is upper bounded by
$(1+\alpha\epsilon)\Vert pq \Vert$ for $\alpha\epsilon < 1$.
\end{lemma}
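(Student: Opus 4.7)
\bigskip

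\noindent
\emph{Proof plan.}
The plan is to reduce the claim to a purely planar-geometric statement about convex polylines whose edge directions span a small angular range, and then to bound the ratio of arc length to chord length by projecting onto a bisecting direction.

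First I would set up notation. Let $R$ be a patch of the boundary of the convex polygon $P_j$ with angular span at most $\sqrt{\alpha\epsilon}$, and let $p,q\in R$. Let $\sigma$ denote the sub-polyline of $R$ from $p$ to $q$ (traversed in counterclockwise order), with directed edges or edge fragments $\vec{e}_1,\vec{e}_2,\dots,\vec{e}_m$ whose concatenation is exactly $\vec{q}-\vec{p}$. Since $P_j$ is convex, as one traverses its boundary counterclockwise the edge directions rotate monotonically in the counterclockwise sense, so the directions of $\vec{e}_1,\dots,\vec{e}_m$ are a subset of the directions of the edges in $R$ and in particular they all lie inside an angular window of width at most $\sqrt{\alpha\epsilon}$.

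Next I would choose a unit reference vector $\hat{u}$ whose direction bisects this angular window. Then every $\vec{e}_i$ makes an angle at most $\sqrt{\alpha\epsilon}/2$ with $\hat{u}$, so
\[
\vec{e}_i \cdot \hat{u} \;\geq\; \lVert \vec{e}_i\rVert\cos\bigl(\tfrac{1}{2}\sqrt{\alpha\epsilon}\bigr).
\]
Summing over $i$ and using $\sum_i\vec{e}_i=\vec{q}-\vec{p}$,
\[
\lVert pq\rVert \;\geq\; (\vec{q}-\vec{p})\cdot\hat{u} \;=\; \sum_{i=1}^{m}\vec{e}_i\cdot\hat{u} \;\geq\; \cos\bigl(\tfrac{1}{2}\sqrt{\alpha\epsilon}\bigr)\sum_{i=1}^{m}\lVert \vec{e}_i\rVert,
\]
and the right-most sum is precisely the geodesic length of $\sigma$ along $R$. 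Hence the ratio between the along-patch distance and $\lVert pq\rVert$ is at most $1/\cos(\tfrac{1}{2}\sqrt{\alpha\epsilon})$.

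The remaining step is the elementary estimate $1/\cos(x)\leq 1+cx^2$ for small $x$ (say $x\leq 1$), which follows from the Taylor expansion of $\sec$; applied to $x=\tfrac{1}{2}\sqrt{\alpha\epsilon}$ this gives an upper bound of $1+\frac{c\,\alpha\epsilon}{4}$ on the ratio. As long as $\alpha$ (to be fixed later in the paper as a small absolute constant) is chosen so that $c/4\leq 1$ and so that $\sqrt{\alpha\epsilon}$ is within the range where the Taylor bound applies, we obtain $1/\cos(\tfrac{1}{2}\sqrt{\alpha\epsilon})\leq 1+\alpha\epsilon$, completing the proof.

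The only mildly delicate point is the quantitative choice of $\alpha$: the bisector-projection argument is clean, but one must make sure the constant hidden in the $\sec$ expansion is absorbed into $\alpha\epsilon$ rather than only into $\alpha\epsilon/4$, and that this choice of $\alpha$ is compatible with how $\alpha$ is used elsewhere (for instance in fixing the number of cones per patch and in the final $(1+\epsilon)$ bound on the whole path). This bookkeeping, rather than the geometry, is what I expect to be the main obstacle.
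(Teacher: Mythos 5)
Your proof is correct, but it takes a genuinely different route from the paper's. The paper intersects the normal to the edge containing $p$ with the normal to the edge containing $q$ at a point $c$, observes that the angle $\angle pcq$ is at most $\sqrt{\alpha\epsilon}$, and then bounds the along-patch length by the circular-arc-style quantity $\Vert pq\Vert\sqrt{\alpha\epsilon}/\sin\sqrt{\alpha\epsilon}$, finishing with the expansion $\sin x \ge x - x^3/6$ to get a factor $1+\alpha\epsilon/6$. You instead use the standard projection argument for curves of bounded direction variation: all edge vectors of the sub-polyline lie in an angular window of width $\sqrt{\alpha\epsilon}$ (monotone rotation of edge directions on a convex boundary), so projecting onto the bisecting unit vector $\hat u$ gives $\Vert pq\Vert \ge \cos\bigl(\tfrac12\sqrt{\alpha\epsilon}\bigr)\cdot L$, i.e.\ $L \le \sec\bigl(\tfrac12\sqrt{\alpha\epsilon}\bigr)\Vert pq\Vert$. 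Your route is more elementary and, frankly, more airtight: the paper's step asserting that a convex polyline is no longer than the corresponding circular-arc bound is left unjustified, whereas your inequality is a two-line computation with dot products. Your final worry about constants is unnecessary: $\sec x \le 1+x^2$ holds for all $x\in[0,1]$, so $\sec\bigl(\tfrac12\sqrt{\alpha\epsilon}\bigr) \le 1+\tfrac{\alpha\epsilon}{4} \le 1+\alpha\epsilon$ with no adjustment of $\alpha$; your leading constant $\alpha\epsilon/8$ is in fact slightly better than the paper's $\alpha\epsilon/6$, so nothing downstream (the cone count or the final $(1+\epsilon)$ bound) is affected.
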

\begin{proof}
Let $e'$ be the edge on which $p$ lies and let $e''$ be the edge on which $q$ lies.
Let $c$ be the point of intersection of normal to $e'$ at $p$ and the normal to $e''$ at $q$.
\begin{figure}[h]
\begin{minipage}[t]{\linewidth}
\begin{center}
\includegraphics[totalheight=0.9in]{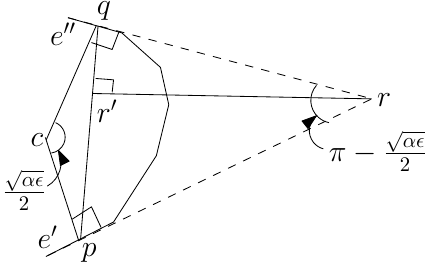}
\end{center}
\vspace{-0.15in}
\caption{\footnotesize Illustrating the construction in proving the upper bound on the patch length}
\label{fig:patchapprx}
\end{minipage}
\end{figure}
Since $p$ and $q$ belong to the same patch, the angle between $cp$ and $cq$ is upper bounded by $\frac{\sqrt{\alpha\epsilon}}{2}$, when the value of $\alpha\epsilon$ is small.
Let $l'$ and $l''$ be the lines that respectively pass through $e'$ and $e''$.
Also, let $r$ be the point at which lines $l'$ and $l''$ intersect.
(Refer to Fig.~\ref{fig:patchapprx}.)
For the small values of $\sqrt{\alpha\epsilon}$ and due to triangle inequality, the geodesic length of the patch between $p$ and $q$ is upper bounded by $\Vert pr \Vert + \Vert qr \Vert$.
Let $r'$ be the point of projection of $r$ on to line segment $pq$.
Suppose $\angle{qrr'} = \angle{prr'}$.
(Analysis of other cases is similar.)
Then $\Vert pr \Vert + \Vert qr \Vert
	\le \frac{\Vert pr' \Vert}{\sin(\frac{\pi}{2} - \frac{\sqrt{\alpha\epsilon}}{4})} + \frac{\Vert r'q \Vert}{\sin(\frac{\pi}{2} - \frac{\sqrt{\alpha\epsilon}}{4})}
	= \frac{\Vert pq \Vert}{\cos{\frac{\sqrt{\alpha\epsilon}}{4}}}
    \le (1+\alpha\epsilon) \Vert pq \Vert
    $.
The last inequality is valid when $\alpha\epsilon < 1$.
\end{proof}

For each obstacle $P_i$, the {\it coreset $S_i$ of $P_i$} is comprised of two vertices chosen from each patch in $\Gamma_i$.
In particular, for each patch $\gamma \in \Gamma_i$, the first and last vertices of $\gamma$ that occur while traversing the boundary of $P_i$ are chosen to be in the coreset $S_i$ of $P_i$. 
The {\it coreset $\cal{S}$ of $\mathcal{P}$} is then simply $\bigcup_i S_i$.

\begin{observation}
\label{obs:coresetsize}
The size of the coreset $\cal{S}$ of $\cal{P}$ is $O(\frac{h}{\sqrt{\alpha\epsilon}})$.
\end{observation}

For every $1 \le i \le h$, our algorithm uses {\it core-polygon} $Q_i = CH(S_i)$ in place of $P_i$.
For any single point obstacle $P_i$ in $\calP$, the core-polygon of $P_i$ is that point itself.
The patch construction procedure guarantees that each polygonal obstacle in $\cal P$ is partitioned into patches such that the core-polygon that correspond to every obstacle in $\cal{P}$ is valid.
Let $\Omega$ be the set comprising of core-polygons corresponding to each of the polygons in $\cal{P}$. 
The set $\Omega$ is called the {\it sketch of $\cal{P}$}.
The following lemmas show that $\Omega$ facilitates in computing a $(1+\alpha\epsilon)$-approximation of the geodesic distance between any two given points in $\cal{F(P)}$.

\begin{lemma}
\label{lem:sketchpt}
Let $v', v''$ be any two vertices of obstacles in $\Omega$.
Then, $dist_\mathcal{P}(v', v'')$ is upper bounded by $(1+\alpha\epsilon) dist_\Omega(v', v'')$.
\end{lemma}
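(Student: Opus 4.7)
The plan is to take a shortest path $\pi^\ast$ in $\mathcal{F}(\Omega)$ from $v'$ to $v''$ and transform it into a path in $\mathcal{F}(\mathcal{P})$ whose length exceeds $|\pi^\ast|$ by at most a factor of $(1+\alpha\epsilon)$. Since $Q_i \subseteq P_i$ for every $i$, we have $\mathcal{F}(\mathcal{P}) \subseteq \mathcal{F}(\Omega)$, so $\pi^\ast$ may traverse the ``bay'' regions $P_i\setminus Q_i$ that are free in $\Omega$ but blocked in $\mathcal{P}$. The key geometric observation is that, for each edge $v_j v_k$ of the corepolygon $Q_i$ coming from a non-trivial patch, the corresponding bay is bounded by the chord $v_jv_k$ on one side and by exactly one patch arc of $\partial P_i$ on the other, and this bay is a convex region (since $P_i$ is convex and the bay is one piece of $P_i$ cut by a chord).

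First I would argue that every maximal sub-segment of $\pi^\ast$ lying in such a bay enters and exits through the patch arc, never through the chord. Indeed, the chord separates the bay from the interior of $Q_i$, so a straight segment of $\pi^\ast$ that crossed the chord transversally would have to pass through the interior of $Q_i$, contradicting $\pi^\ast \subseteq \mathcal{F}(\Omega)$. By convexity of the bay, each maximal intersection of a straight edge of $\pi^\ast$ with a given bay is a single sub-segment whose two endpoints both lie on the patch arc, hence on the same patch of $\Pi_i$. I would then replace each such sub-segment by the walk along the patch arc between its two endpoints. Since $\partial P_i \subseteq \mathcal{F}(\mathcal{P})$ and the arc lies inside $P_i$, while the polygons of $\mathcal{P}$ are pairwise disjoint, this replacement stays in $\mathcal{F}(\mathcal{P})$ and does not create new obstacle intersections. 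Bays from different obstacles are disjoint, and portions of $\pi^\ast$ outside all bays lie outside every $P_i$ and are already in $\mathcal{F}(\mathcal{P})$, so no conflicts arise when performing the replacements in parallel.

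For the length bound, Lemma~\ref{lem:patchdist} applies to each replaced sub-segment because its endpoints lie in a common patch; it gives walk length at most $(1+\alpha\epsilon)$ times the straight-line distance between the endpoints. Summing the inequality over all bay sub-segments and adding the unchanged portions gives a path in $\mathcal{F}(\mathcal{P})$ from $v'$ to $v''$ of length at most $(1+\alpha\epsilon)\,|\pi^\ast| = (1+\alpha\epsilon)\,dist_{\Omega}(v',v'')$. The claim follows since $dist_{\mathcal{P}}(v',v'')$ is the infimum of such lengths. The main obstacle is the first step, pinning down that bay crossings go through the patch arc and not through the chord; once that is in place, the length accounting and the disjointness of bays across obstacles are straightforward. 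Degenerate situations (a segment tangent to an arc, or $v'$ itself coinciding with a chord-arc corner $v_j$ or $v_k$) are handled by treating such a shared vertex as a point on the arc and taking zero-length replacements where appropriate.
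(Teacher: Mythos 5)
Your proposal is correct and follows essentially the same route as the paper's proof: both take a shortest path in $\mathcal{F}(\Omega)$, observe that any excursion into $P_i\setminus Q_i$ must enter and leave through a single patch (because the path cannot cross into the interior of the corepolygon $Q_i$), and then replace each such sub-segment by the patch walk, invoking Lemma~\ref{lem:patchdist} and summing. Your bay-based framing (convexity of each chord-cut piece, crossings forced through the arc rather than the chord) is just a slightly more explicit packaging of the paper's segment-by-segment case analysis, so nothing further is needed.
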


\begin{proof}
Let $v_1, v_2$ be any two successive vertices along a shortest path between $v'$ and $v''$ in ${\calF(\Omega)}$.
Let $\mathcal{O} \subseteq \calP$ be the set of obstacles intersected by the line segment $v_1v_2$.
Let $v_1$ and $v_2$ be respectively belonging to obstacles $P_j$ and $P_k$.
Also, let $\Gamma_j$ (resp. $\Gamma_k$) be the set comprising the partition of boundary of $P_j$ (resp. $P_k$) into patches.
And, let $S_j$ (resp. $S_k$) be the coreset of $P_j$ (resp. $P_k$).
Since the line segment $v_1v_2$ does not intersect the interior of the $CH(S_j)$ or $CH(S_k)$, it intersects at most one patch belonging to set $\Gamma_j$ and at most one patch belonging to set $\Gamma_k$.
Let $v_1$ and $r$ be the points of intersection of line segment $v_1v_2$ with a patch $\gamma \in \Gamma_j$.
(These points might as well be the endpoints of $\gamma$.)
Then from Lemma~\ref{lem:patchdist}, the geodesic distance between $v_1$ and $r$ along $\gamma$ is upper bounded by $(1+\alpha\epsilon)\Vert v_1 r \Vert$.
(Refer Fig.~\ref{fig:sppatchintersect}.)
Analogously, let $v_2$ and $r'$ be the points of intersection of line segment $v_1v_2$ with a patch $\gamma' \in \Gamma_k$.
Then the geodesic distance between $v_2$ and $r'$ is upper bounded by $(1+\alpha\epsilon) \Vert v_2r' \Vert$.
For any convex polygonal obstacle $P_l$ in $\mathcal{O}$ distinct from $P_j$ and $P_k$, let $p', p''$ be the points of intersection of $v_1v_2$ with the boundary of $P_l$.  
Since the line segment $v_1v_2$ does not intersect the interior of the convex hull of coreset corresponding to $P_l$, both $p'$ and $p''$ belong to the same patch, say $\gamma'' \in \Gamma_l$.
Then again from Lemma~\ref{lem:patchdist}, the geodesic distance between $p'$ and $p''$ along patch $\gamma''$ is upper bounded by $(1+\alpha\epsilon) \Vert p'p'' \Vert$.
We modify $v_1v_2$ as follows: For every maximal subsection, say $p_i'p_i''$, of the line segment $v_1v_2$ that is interior to a polygonal obstacle of $\cal P$, we replace that subsection with a geodesic Euclidean shortest path in $\cal{F(P)}$ between $p_i'$ and $p_i''$.

\begin{figure}[h]
\begin{minipage}[t]{\linewidth}
\begin{center}
\includegraphics[totalheight=0.9in]{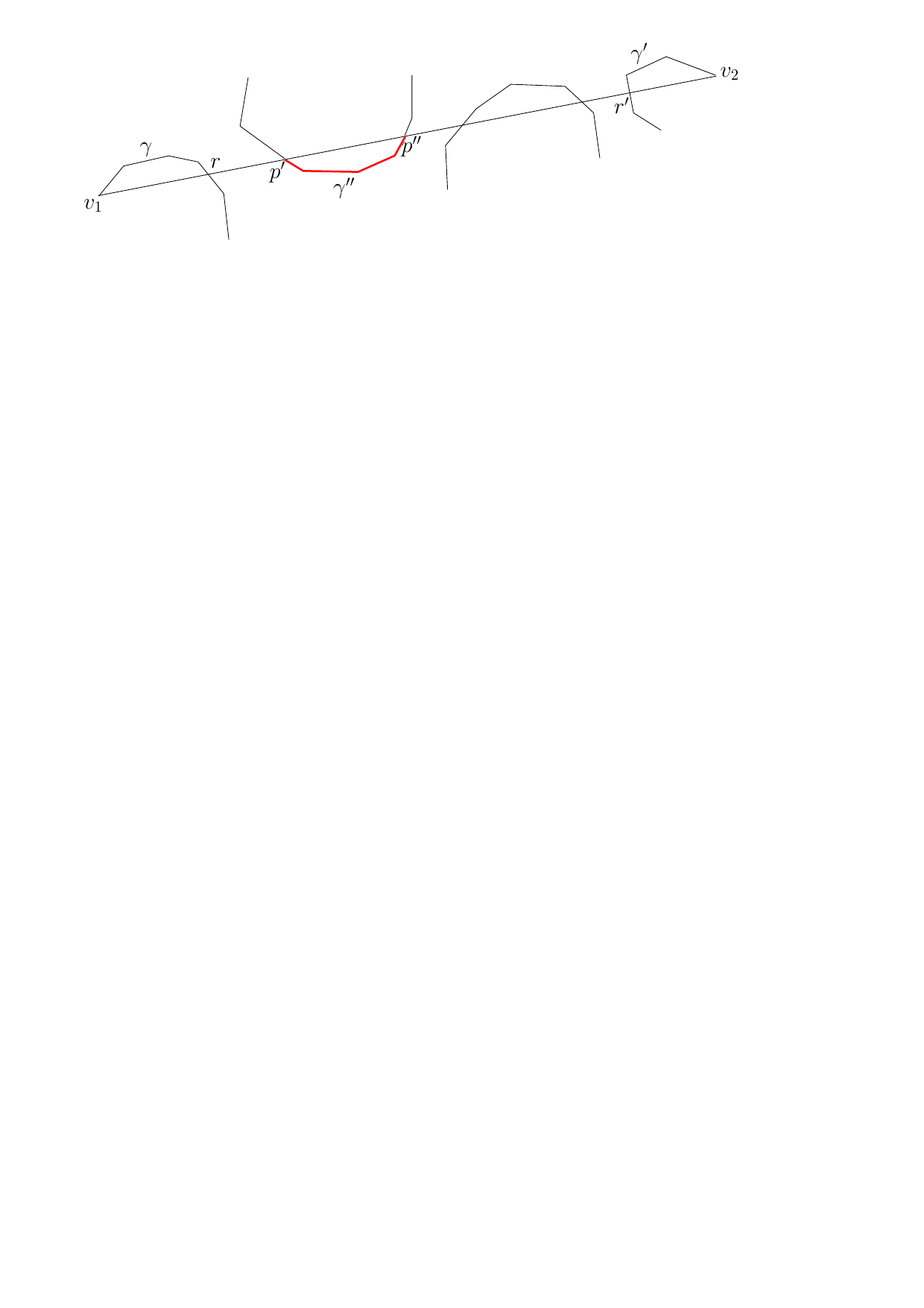}
\vspace{-0.15in}
\caption{\footnotesize A line segment $v_1v_2$ of a shortest path amid $\Omega$ intersecting three patches belonging to obstacles in $\cal{P}$}
\label{fig:sppatchintersect}
\end{center}
\end{minipage}
\end{figure}

Let $\gamma_1, \gamma_2, \ldots, \gamma_k$ be the set $\Gamma$ of patches intersected by the line segment $v_1v_2$. 
Also, for every $1 \le i \le k$, let $p_i', p_i''$ be the points of intersections of $v_1v_2$ with patch $\gamma_i \in \Gamma$ with $p_i'$ closer to $v_1$ than $v_2$ along the line segment $v_1v_2$.
Then $\sum_{i=1}^k dist_{\cal{P}}(p_i', p_i'')$ added with $\sum_{i=1}^{k-1} \Vert p_i''p_{i+1}' \Vert$ is upper bounded by $(1+\alpha\epsilon) \Vert v_1v_2 \Vert$.
Let $v_1, \ldots, v_l$ be the vertices of $\cal{P}$ that occur in that order along a Euclidean shortest path in $\calF(\Omega)$ between vertices $v',v'' \in \cal{P}$.
Then $dist_{\cal{P}}(v_1, v_l) = \sum_{i=1}^{l-1} dist_{\cal{P}}(v_i, v_{i+1}) \le (1+\alpha\epsilon) \sum_{i=1}^{l-1} dist_{\Omega}(v_i, v_{i+1})$.
Note that we do this transformation for each line segment of the shortest path that intersects any patch.
\end{proof}

Since $\mathcal{F(P)} \subseteq \mathcal{F}(\Omega)$, every path that avoids convex polygonal obstacles in $\cal{P}$ is also a path that avoids convex polygonal obstacles in $\Omega$.
This observation leads to the following:

\begin{lemma}
\label{lem:ppt}
For any two vertices $v', v''$ of $\mathcal{P} $, $dist_{\Omega}(v',v'') \leq dist_{\mathcal{P}}(v',v'') $.
\end{lemma}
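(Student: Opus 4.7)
The plan is to leverage the construction $Q_i = CH(S_i) \subseteq P_i$, which holds for every $1 \le i \le h$ precisely because $S_i$ is a subset of the vertices of the convex polygon $P_i$ and $P_i$ is convex. This obstacle-level containment is the only structural fact the lemma needs; the remaining argument is a short monotonicity observation for shortest-path lengths under enlargement of the free space.

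First I would translate the containment $Q_i \subseteq P_i$ into a free-space inclusion. Taking interiors and unioning over $i$ gives $\bigcup_i \mathrm{int}(Q_i) \subseteq \bigcup_i \mathrm{int}(P_i)$, and then taking the closure of the complement in $\mathbb{R}^2$ yields exactly the inclusion $\mathcal{F}(\mathcal{P}) \subseteq \mathcal{F}(\Omega)$ already noted in the paragraph preceding the lemma. I would also briefly note that the endpoints $v', v''$, being vertices of $\mathcal{P}$, lie on $\partial P_i \cup \partial P_j$ for the respective obstacles and therefore belong to $\mathcal{F}(\mathcal{P})$; by the inclusion they also belong to $\mathcal{F}(\Omega)$, so $dist_\Omega(v',v'')$ is well-defined.

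Next I would take an arbitrary obstacle-avoiding path $\pi$ joining $v'$ and $v''$ that lies in $\mathcal{F}(\mathcal{P})$. By the above inclusion $\pi \subseteq \mathcal{F}(\Omega)$ as well, so $\pi$ is a feasible path amid the obstacles of $\Omega$. Hence $\mathrm{length}(\pi) \ge dist_\Omega(v', v'')$. Taking the infimum of $\mathrm{length}(\pi)$ over all such $\pi$ gives $dist_\mathcal{P}(v', v'') \ge dist_\Omega(v', v'')$, which is the claim.

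The main (and only) point of care is the obstacle containment $Q_i \subseteq P_i$ itself, which uses convexity of $P_i$ in an essential way: the coreset $S_i$ consists of vertices of $P_i$, so $CH(S_i) \subseteq CH(P_i) = P_i$. Beyond this the proof is essentially tautological, so I do not anticipate a genuine obstacle; the value of the lemma is that, combined with Lemma~\ref{lem:sketchpt}, it sandwiches $dist_\mathcal{P}$ by $dist_\Omega$ up to the factor $(1+\alpha\epsilon)$.
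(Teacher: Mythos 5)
Your proposal is correct and matches the paper's own (one-line) argument: the paper justifies the lemma precisely by observing that $\mathcal{F}(\mathcal{P}) \subseteq \mathcal{F}(\Omega)$, so every path amid $\mathcal{P}$ is also a path amid $\Omega$. Your additional remarks --- that $Q_i = CH(S_i) \subseteq P_i$ follows from convexity of $P_i$ and that the endpoints lie in both free spaces --- are just explicit spellings-out of what the paper leaves implicit.
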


Considering the given two points $s, t \in \calF(\calP)$ as degenerate obstacles, a $(1+\alpha\epsilon)$-approximation of the shortest distance between $s$ and $t$ amid polygonal obstacles in $\calP$ is computed.

\begin{lemma}
\label{lem:numptsketch}
For a set ${\cal P}$ of $h$ pairwise disjoint convex polygons in $\mathbb{R}^2$ and two points $s, t \in \cal{F(P)}$, the sketch $\cal{S}$ of $\mathcal{P}$ with cardinality $O(\frac{h}{\sqrt{\alpha\epsilon}})$ suffices to compute a $(1+\alpha\epsilon)$-approximate shortest path between $s$ and $t$ in $\mathcal{F(P)}$.
\end{lemma}
\begin{proof}
Immediate from Observation~\ref{obs:coresetsize}, Lemma~\ref{lem:sketchpt}, and Lemma~\ref{lem:ppt}.
\end{proof}

Our approach in computing coresets and an approximate shortest path using these coresets is quite different from \cite{conf/soda/AgarwalSY09}. 
As will be shown in Section~\ref{sect:polydom}, our sketch construction is extended to compute shortest paths even when $\cal{P}$ comprises of polygon obstacles which are not necessarily convex. 
How our algorithm differs from \cite{conf/soda/AgarwalSY09} for the convex polygonal case is detailed herewith.
Let $\calP$ be the polygonal domain defined with convex polygons $P_1, P_2, \ldots, P_h$.
In this algorithm as well as in \cite{conf/soda/AgarwalSY09}, $P_i$ is approximated with $Q_i$, for every $1 \le i \le h$.
However, for every $1 \le i \le h$, in our algorithm $Q_i \subseteq P_i$ whereas in \cite{conf/soda/AgarwalSY09}, $P_i \subseteq Q_i$.
Let the new polygonal domain $\Omega$ be defined with simple polygons $Q_1, Q_2, \ldots, Q_h$.
Unlike \cite{conf/soda/AgarwalSY09}, in computing $\Omega$, our algorithm does not require using plane sweep algorithm to find pairwise vertically visible simple polygons of ${\cal P}$.
As described above, our algorithm partitions the boundary of each convex polygon $P$ into a set of patches.

\subsection{Computing an approximate geodesic shortest path in $\cal{F(P)}$ using the sketch $\Omega$ of $\cal{P}$}
\label{subsect:gcon}

Since we intend to compute an approximate shortest path, to keep our algorithm simpler, we do not want to use the algorithm from \cite{journals/siamcomp/HershbergerS99} to compute a shortest path amid convex polygonal obstacles in $\Omega$, 
Instead, we use a spanner constructed with the conic Voronoi diagrams ($CVD$s) \cite{conf/stoc/Clarkson87}.
Further, in our algorithm, for any maximal line segment with endpoints $r', r''$ along the computed (approximate) shortest path amid obstacles in $\Omega$, if the line segment $r'r''$ lies in $\cal{F(Q)} -  \cal{F(P)}$, we replace line segment $r'r''$ with the geodesic Euclidean shortest path between $r'$ and $r''$ in ${\cal F(P)}$.

Since our algorithm relies on \cite{conf/stoc/Clarkson87}, we give a brief overview of that algorithm first.
The algorithm in \cite{conf/stoc/Clarkson87} constructs a spanner $G(V, E)$ for polygonal domain $\calP$.
Noting that the endpoints of line segments of a shortest path in $\cal{F(P)}$ are a subset of vertices of polygonal obstacles in $\calP$, the node set $V$ is defined as the vertex set of $\mathcal{P}$.
Let $\cal{C}'$ be the set of $O(\frac{1}{\epsilon})$ cones with apex at the origin of the coordinate system together partitioning $\mathbb{R}^2$.
(The cone angle of each cone in $\cal{C}'$ except for one is set to $\epsilon$ and that one cone has $2\pi-\lfloor \frac{2\pi}{\epsilon} \rfloor \epsilon$ as the cone angle.)
Let $C \in \mathcal{C}'$ be a cone with orientation $\theta$ and let $C' \in \mathcal{C}'$ be the cone with orientation $-\theta$.
For each cone $C \in \mathcal{C}'$ and a set $K$ of points, the set of cones resultant from introducing a cone $C_p$ for every point $p \in K$, is the conic Voronoi diagram $CVD(C, K)$.
(Note that as mentioned earlier, $C_p$ is the cone resulted from translating cone $C$ to have the apex at the point $p$.)
For a given cone $C_v$, among all the points on the boundaries of polygons in $\calP$ that are visible from $v$, a point $p$ whose projection onto the bisector of $C_v$ is closest to $v$ is said to be {\it a closest point in $C_v$ to $v$}.
If more than one point is closest in $C_v$ to $v$, then we arbitrarily pick one of those points.
For every vertex $v$ of $\calP$ and for every cone $C_v$, if a closest point $p$ in $C_v-\{v\}$ to $v$ is not a vertex of $\calP$, then the algorithm includes $p$ as a node in $V$.  
Further, for every vertex $v$ of $\mathcal{P}$ and for every cone $C_v$, an edge $e$ joining $v$ and a closest point $p$ in $C_v-\{v\}$ to $v$ is introduced in $E$ with its weight equal to the Euclidean distance between $v$ and $p$.
For every node $v$ in $G$ that corresponds to a point $p$ on the boundary of $P \in \calP$, if $p$ is not a vertex of $\calP$, then for every neighbor $p'$ of $p$ on the boundary of $P$ which has a corresponding node $v'$ in $V$, we introduce an edge $e'$ between $v$ and $p'$ into $E$ and set the weight of $e'$ equal to the Euclidean distance between $v$ and $p'$.
These are the only edges included in $E$.
The Theorem~$2.5$ in \cite{conf/stoc/Clarkson87} proves that if $d$ is the obstacle-avoiding geodesic Euclidean shortest path distance between any two vertices, say $v'$ and $v''$, of $\mathcal{P}$, then the distance between the corresponding nodes $v'$ and $v''$ in $G$ is upper bounded by $(1+\epsilon)d$.
The $CVD(C, K)$ is computed using the plane sweep in $O(|K|\lg{|K|})$ time; and, the well-known planar point location data structure is used to locate the region in $CVD(C, K)$ to which a given query point belongs to. 

As detailed below, apart from computing a sketch $\Omega$ of ${\cal P}$, as compared with \cite{conf/stoc/Clarkson87}, the number of cones per obstacle that participate in computing $CVD$s amid ${\cal F}(\Omega)$ is further optimized by exploiting the convexity of obstacles together with the properties of shortest paths amid convex obstacles.
By limiting the number of vertices of $\mathcal{P}$ at which the cones are initiated to coreset $\cal{S}$ of vertices, our algorithm improves the space complexity of the algorithm in \cite{conf/stoc/Clarkson87}.
Further, by exploiting the convexity of obstacles, we introduce $O(\frac{1}{\sqrt{\alpha\epsilon}})$ cones per obstacle, each with cone angle $O(\sqrt{\alpha\epsilon})$, and show that these are sufficient to achieve the claimed approximation factor. 

Let $v$ be a vertex of $\mathcal{P}$ that belongs to coreset $S_i$ of convex polygon $P_i$. 
Let $v', v, v''$ be the vertices that respectively occur while traversing the boundary of $P_i$ in counterclockwise order.
Also, let $C'$ be the cone defined by the pair of rays $(\overrightarrow{vv'},- \overrightarrow{vv''})$ and let $C''$ be the cone defined by the pair of rays $(\overrightarrow{vv''}, -\overrightarrow{vv'})$.
\begin{wrapfigure}{r}{0.55\textwidth}
\begin{minipage}[t]{\linewidth}
\vspace{-10pt}
\begin{center}
\includegraphics[totalheight=1.3in]{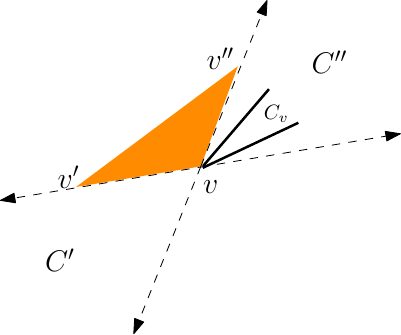}
\end{center}
\vspace{-20pt}
\caption{\footnotesize Illustrating an admissible cone $C_v$ incident to a coreset vertex $v$ of an obstacle}
\label{fig:admisscones}
\vspace{-2pt}
\end{minipage}
\end{wrapfigure}
For a coreset vertex $v \in {\cal S}$, a cone $C \in \mathcal{C}$ is said to be {\it admissible} at $v$ whenever $C_v \cap C'$ or $C_v \cap C''$ is non-empty. 
(See Fig.~\ref{fig:admisscones}.)
Let $p$ and $q$ be two points in $\mathcal{F(P)}$ such that $p$ and $q$ are not visible to each other due to polygonal obstacles in $\calP$.
Let $v$ be a vertex of $P_i$ through which a shortest path between $p$ and $q$ passes.
Since any shortest path is convex at $v$ with respect to $P_i$, there exists a shortest path between $p$ and $q$ where one of its line segment lies in $C'$, and another line segment of that path lies in $C''$. 
Hence, in computing a Euclidean shortest path amid $\cal{P}$, it suffices to consider admissible cones at the vertices of $\cal{P}$.

Note that whenever two points $s$ and $t$ between which we intend to find a shortest path are visible to each other, the line segment $st$ needs to be computed. 
To facilitate this, for every degenerate point obstacle $p$, every cone $C$ with apex $p$ is considered to be an admissible cone.

The same properties carry over to the polygonal domain $\Omega$ as well.
For any two points $p_1$ and $p_2$ in $\mathcal{F}(\Omega)$, suppose that $p_1$ and $p_2$ are not visible to each other.   
Consider any shortest path $\tau$ between $p_1$ and $p_2$.
For any line segment $ab$ in $\tau$, $ab$ is either an edge of a polygon in $\Omega$ or it is a tangent to an obstacle $O \in \Omega$.
In the latter case, $ab$ belongs to an admissible cone of $O$.
When the polygonal domain is $\Omega$, the following Lemma upper bounds the number of cones at the vertices of convex polygons in $\Omega$. 

\begin{lemma}
\label{lem:numcones}
The number of cones introduced at all the obstacles of $\Omega$ is $O(\frac{h}{\sqrt{\alpha\epsilon}})$.
\end{lemma}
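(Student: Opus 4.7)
The target bound $O(h/\sqrt{\alpha\epsilon})$ is exactly the size of the coreset $\mathcal{S}$ (Observation~\ref{obs:coresetsize}), so on average each coreset vertex must contribute only $O(1)$ admissible cones. Achieving this requires amortizing the count across each obstacle rather than bounding cones one vertex at a time.

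The plan is to handle the obstacles one at a time and then sum. Fix an obstacle $P_i$ and let $v \in S_i$ be a coreset vertex with $P_i$-neighbors $v', v''$. Let $\phi_v$ denote the total angular extent of $C' \cup C''$ at $v$. Since every cone in $\mathcal{C}$ has angular width $\Theta(\sqrt{\alpha\epsilon})$, the number of admissible cones at $v$ is at most $\phi_v/\sqrt{\alpha\epsilon} + O(1)$, where the additive constant absorbs the at-most-two cones of $\mathcal{C}$ that straddle the boundary of $C' \cup C''$.

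The key amortized claim is $\sum_{v \in S_i} \phi_v = O(1)$. Each $\phi_v$ is, up to a constant factor, an exterior angle of $P_i$ at $v$ --- it describes precisely the tangent wedges through which a shortest path convex at $v$ can pass. Because $P_i$ is convex, the sum of its exterior angles over all vertices equals $2\pi$, so restricting to the coreset subset only decreases this sum. Plugging back in, the total admissible cones per obstacle is $\sum_{v \in S_i}(\phi_v/\sqrt{\alpha\epsilon} + O(1)) \le O(1/\sqrt{\alpha\epsilon}) + O(|S_i|) = O(1/\sqrt{\alpha\epsilon})$, where the second term uses $|S_i| = O(1/\sqrt{\alpha\epsilon})$ from Observation~\ref{obs:coresetsize}. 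Summing across the $h$ obstacles yields the claimed $O(h/\sqrt{\alpha\epsilon})$ bound; degenerate point obstacles, where every cone of $\mathcal{C}$ is admissible, contribute only $O(1/\sqrt{\alpha\epsilon})$ each and are subsumed into the same total.

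The step I expect to be the main obstacle is the angular bookkeeping itself. A naive per-vertex bound $\phi_v \le 2\pi$ would allow $O(1/\sqrt{\alpha\epsilon})$ cones per vertex and hence $O(h/(\alpha\epsilon))$ overall, which is a factor of $\sqrt{\alpha\epsilon}$ too many. One must argue carefully that distinct coreset vertices of $P_i$ cannot each spend the full $2\pi$ of admissible angle---their tangent wedges pack into the polygon's total turning---and that the $O(1)$ rounding correction at each vertex is affordable precisely because $|S_i| = O(1/\sqrt{\alpha\epsilon})$.
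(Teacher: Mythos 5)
Your proof is correct and is essentially the paper's argument viewed from the dual side: the paper fixes a cone orientation and uses convexity to argue that each orientation is admissible at $O(1)$ vertices of a given obstacle (since the edge-direction wedges at distinct vertices of a convex polygon are disjoint), whereas you fix a vertex and amortize its angular budget $\phi_v$, using the fact that these wedges --- the exterior angles --- sum to $2\pi$. Both countings rest on the same convexity fact, and your handling of the $O(1)$ straddling cones per vertex via $|S_i|=O(1/\sqrt{\alpha\epsilon})$ matches the bound the paper obtains.
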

\begin{proof}
Let $O$ be the origin of the coordinate system.
Let $\overrightarrow{r}$ be a ray with origin at $O$.
(See Fig.~\ref{fig:numadmcones}.)
For any two distinct vertices $v'$ and $v''$ of a convex polygon $P$, let $\overrightarrow{r_{v'}}$ be the ray parallel to $\overrightarrow{r}$ with origin at $v'$ and pointing in the same direction as $\overrightarrow{r}$ and let $\overrightarrow{r_{v''}}$ be the ray parallel to $\overrightarrow{r}$ with origin at $v''$ and point in the same direction as $\overrightarrow{r}$.
Also, let $v_1'$ precede $v'$ (resp. $v_1''$ precede $v''$) and $v_2'$ succeed $v'$ (resp. $v_2''$ succeed $v''$) while traversing the boundary of $P$ in counterclockwise order.
Since $P$ is a convex polygon, if every point of $\overrightarrow{r_{v'}}$ belongs to the cone defined by $\overrightarrow{v_1'v'}$ and $\overrightarrow{v'v_2'}$ then it is guaranteed that not every point of $\overrightarrow{r_{v''}}$ belongs to the cone defined by $\overrightarrow{v''v_2''}$ and $\overrightarrow{v_1''v''}$.
\begin{figure}[h]
\center{
\includegraphics[totalheight=1.2in]{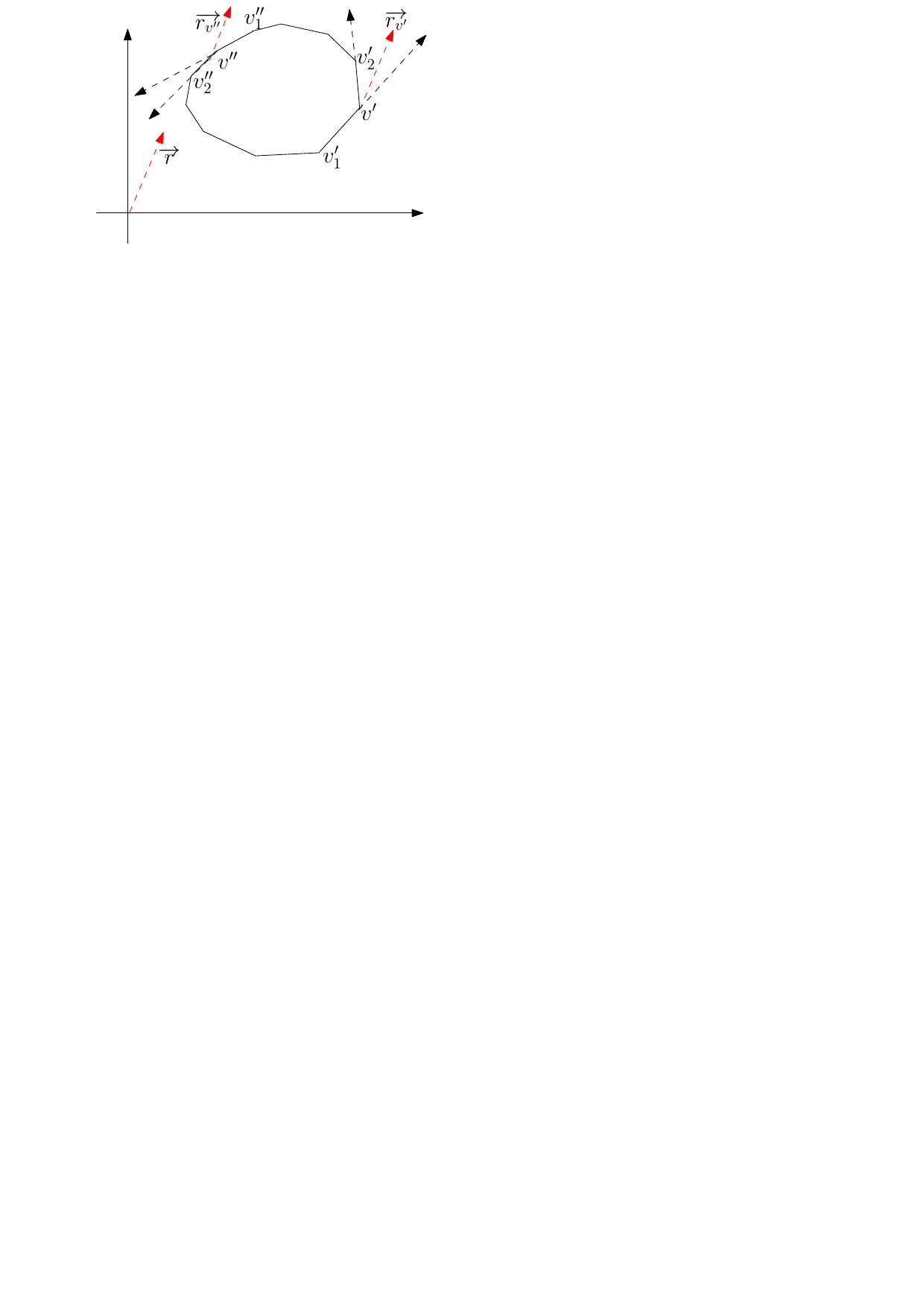}
\caption{\footnotesize Illustrating that a ray parallel to $r$ can exist in only one admissible cone per obstacle} 
\label{fig:numadmcones}
}
\end{figure}
Extending this argument, if a cone $C_{v'}$ is admissible at $v'$ then the cone $C_{v''}$ cannot be admissible at $v''$.
Since the number of coreset vertices per obstacle is $O(\frac{1}{\sqrt{\alpha\epsilon}})$, the number of cones introduced per obstacle is $O(\frac{1}{\sqrt{\alpha\epsilon}})$.
Further, since there are $h$ convex polygonal obstacles, number of cones at all the obstacle vertices together is $O(\frac{h}{\sqrt{\alpha\epsilon}})$.
\end{proof}

Next, we describe the algorithm to compute the spanner $G(V = S \cup S', E)$. 
The set $S$ comprises of nodes corresponding to coreset $\cal{S}$. 
The set $S'$ is a set of Steiner points, as follows.
For every $v \in S$ and every admissible cone $C_v$, let $V'$ be the set of points on the boundaries of obstacles of $\Omega$ that are visible from $v$ and belong to cone $C_v$.
(See Fig.~\ref{fig:coresetspanner}.)
The point $p$ in $V'$ that is closest to $v$, termed the {\it closest Steiner point in $C_v$} to $v$, is determined and $p$ is added to $S'$.
An edge $e$ between $v$ and $p$ is introduced in $E$ while the Euclidean distance between $v$ and $p$ is set as the weight of $e$ in $G$.
Let $p$ be located on a convex polygonal obstacle $P$.
\begin{wrapfigure}{r}{0.5\textwidth}
\begin{minipage}[t]{\linewidth}
\vspace{-10pt}
\begin{center}
\includegraphics[totalheight=0.7in]{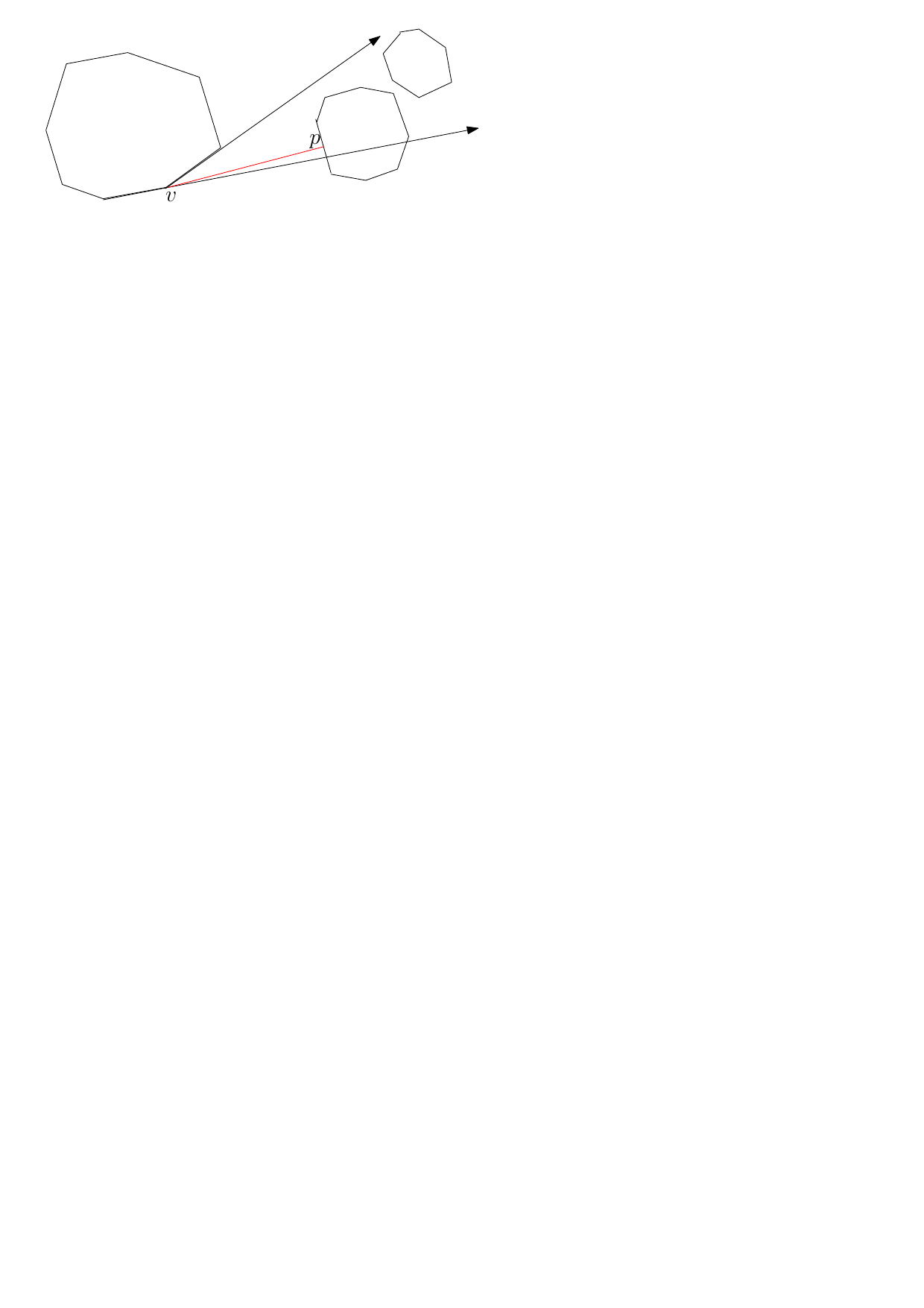}
\centering
\caption{\footnotesize Illustrating an edge of the spanner}
\label{fig:coresetspanner}
\end{center}
\end{minipage}
\end{wrapfigure}
Further, for every Steiner point $p$, let $v'$ (resp. $v''$) be the coreset vertex or Steiner point that lies on the boundary of $P$ and occurs before (resp. after) $p$ while traversing the boundary of $P$ in counterclockwise order.
Then an edge $e'$ (resp. $e''$) between $p$ and $v'$ (resp. $p$ and $v''$) is introduced in $E$ while the geodesic distance between $p$ and $v'$ (resp. $p$ and $v''$) along the boundary of $P$ is set as the weight of $e'$ (resp. $e''$) in $G$. 
Note that both $|V|$ and $|E|$ are $O(\frac{h}{\sqrt{\alpha\epsilon}})$.
For any two points $s, t \in \calF(\Omega)$, the following Lemma upper bounds the $dist_G(s, t)$ in terms of $dist_\Omega(s, t)$.

\begin{lemma}
\label{lem:coneangle}
Let $G$ be the spanner constructed from $\Omega$.
Let $dist_G(p', p'')$ be the distance between $p'$ and $p''$ in $G$.
Then for any two points $s, t \in {\cal F}(\Omega)$, $dist_{\Omega}(s, t) \le dist_G(s, t) \le (1+\sqrt{\alpha\epsilon})dist_{\Omega}(s, t)$.
\end{lemma}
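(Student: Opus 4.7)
My plan has three parts. First, I would dispatch the lower bound $dist_\Omega(s,t) \le dist_G(s,t)$ by observing that every edge of $G$ is either an obstacle-avoiding straight segment in $\mathcal{F}(\Omega)$ (the edge from a coreset vertex $v$ to its closest visible Steiner point in some admissible cone $C_v$) or a boundary segment on some corepolygon (between two consecutive Steiner/coreset points on the same $Q_i$). In either case the edge realises a valid subpath in $\mathcal{F}(\Omega)$ whose Euclidean length equals its weight, so concatenating edges of any $G$-path from $s$ to $t$ produces an obstacle-avoiding path in $\Omega$ of length exactly $dist_G(s,t)$.

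Second, for the upper bound I would imitate the inductive path-tracking argument of Clarkson \cite{conf/stoc/Clarkson87}, adapted to the admissible-cone/Steiner-point spanner. Fix a true shortest path $\pi^* = s = w_0, w_1, \ldots, w_m = t$ in $\mathcal{F}(\Omega)$. Every interior bend $w_i$ sits at a vertex of some corepolygon $Q_j$; because $Q_j$ is convex, $\pi^*$ is convex at $w_i$ w.r.t. $Q_j$, so $w_i$ is a coreset vertex and both segments $w_{i-1}w_i$ and $w_i w_{i+1}$ lie in admissible cones at $w_i$. I would then build a $G$-path $u_0 = s, u_1, u_2, \ldots$ by the following rule: if $u_k$ lies on (or coincides with the tail of) the segment $w_{i-1}w_i$, let $C$ be the admissible cone at $u_k$ containing the direction towards $w_i$, and set $u_{k+1}$ to be the closest visible Steiner point or coreset vertex in $C_{u_k}$; by construction $(u_k,u_{k+1}) \in E$. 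Whenever the walk reaches a Steiner point that the reference path does not meet, we use the boundary edges of $G$ along $Q_j$ to catch up to $w_i$ and then move on to the next segment of $\pi^*$.

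Third, and this is where the real work lies, I would argue that the walk terminates and that its cumulative length is at most $(1+\alpha\epsilon)\,|\pi^*|$. The per-step charge is controlled by the cone width $\theta = O(\sqrt{\alpha\epsilon})$. Since $u_{k+1}$ is the closest admissible point in $C_{u_k}$, we get $\Vert u_k u_{k+1}\Vert \le \Vert u_k w_i\Vert$, and a law-of-cosines estimate inside a cone of angular width $\theta$ yields $\Vert u_{k+1} w_i\Vert \le \Vert u_k w_i\Vert - \Vert u_k u_{k+1}\Vert\cos\theta$. Telescoping across the cone steps within a single segment of $\pi^*$ inflates its length by at most $1/\cos\theta \le 1 + \theta^2 = 1 + O(\alpha\epsilon)$, and these inflations do not compound across segments because the walk resynchronises with $\pi^*$ at each $w_i$. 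Summing over the $m$ segments and choosing $\alpha$ small enough to absorb the hidden constant yields the claimed stretch. The main obstacle I anticipate is showing that the walk actually makes progress, i.e.\ that we reach the next bend $w_i$ in finitely many cone steps: this should follow because each step strictly decreases the projection of the residual distance onto the cone axis, while the added boundary edges at Steiner points ensure that grazing an obstacle never traps the walk.
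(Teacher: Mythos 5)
Your architecture differs from the paper's: the paper treats the stretch bound of the cone-Voronoi spanner as a black box (it invokes Theorem~2.5 of Clarkson, which asserts that a cone angle $\psi$ with $\sin\psi-\cos\psi\le -1/(1+\alpha\epsilon)$ suffices) and spends its effort only on arguing that restricting to \emph{admissible} cones loses nothing --- a point you also cover, correctly, via convexity of the shortest path at obstacle vertices. Your lower bound $dist_\Omega(s,t)\le dist_G(s,t)$ is fine. The genuine gap is your quantitative per-step estimate. The inequality $\Vert u_{k+1}w_i\Vert \le \Vert u_k w_i\Vert - \Vert u_k u_{k+1}\Vert\cos\theta$ is false: put $u_{k+1}$ at the far angular boundary of the cone with $\Vert u_k u_{k+1}\Vert=\Vert u_k w_i\Vert=r$ and angle exactly $\theta$; then $\Vert u_{k+1}w_i\Vert=2r\sin(\theta/2)\approx r\theta$, whereas your bound would give $r(1-\cos\theta)\approx r\theta^2/2$. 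The correct elementary estimate is $\Vert u_{k+1}w_i\Vert\le \Vert u_kw_i\Vert-\Vert u_ku_{k+1}\Vert(\cos\theta-\sin\theta)$ (use $\sqrt{r^2+d^2-2rd\cos\beta}\le (r-d\cos\beta)+d\sin\beta$), which is precisely the condition the paper imports from Clarkson. Telescoping then yields an inflation factor of $1/(\cos\theta-\sin\theta)=1+\theta+O(\theta^2)$, not $1/\cos\theta=1+O(\theta^2)$. Hence a cone angle $\theta=\Theta(\sqrt{\alpha\epsilon})$ gives, by this route, only a $(1+O(\sqrt{\alpha\epsilon}))$ stretch; your claimed $(1+\alpha\epsilon)$ does not follow. (The paper's own algebra, which solves $-1+\psi+\psi^2/2\le -1/(1+\alpha\epsilon)$ and reports $\psi\le\sqrt{\alpha\epsilon}$, effectively drops the linear term; the solution of that quadratic is $\psi\le\sqrt{1+2\alpha\epsilon/(1+\alpha\epsilon)}-1=\Theta(\alpha\epsilon)$. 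So your error mirrors one already in the text, but it is an error nonetheless.)

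A second, smaller gap: the walk is underspecified at Steiner points. Cones are apexed only at coreset vertices, so once $u_{k+1}$ is a Steiner point your only available moves are boundary edges; the ``catch up to $w_i$ along the boundary'' detour is never charged against $|\pi^*|$ (Lemma~\ref{lem:patchdist} bounds such a detour only when the two points lie on a common patch, which you have not established), and you also need $w_i$ to be reachable from $u_{k+1}$ by boundary edges at all. Making this resynchronisation and the termination of the walk rigorous is where the substance of a Clarkson-style proof lives, and your sketch defers exactly that part.
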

\begin{proof}
Theorem~2.5 of \cite{conf/stoc/Clarkson87} concludes that to achieve $(1+\alpha\epsilon)$-approximation, $\sin{\psi} - \cos{\psi} \le \frac{-1}{1+\alpha\epsilon}$.
Expanding $sine$ and $cosine$ functions for the first few terms yield $-1 + \psi + \frac{\psi^2}{2!} \le \frac{-1}{1+\alpha\epsilon}$.
Solving the quadratic equation in $\psi$ yields $\psi \le \alpha\epsilon$.
Since we are using cones with cone angle $\sqrt{\alpha\epsilon}$ in our algorithm, a $(1+\sqrt{\alpha\epsilon})$-approximation is achieved. 

We claim that introducing a subset of cones (admissible cones) rather than all the cones as used in  \cite{conf/stoc/Clarkson87} does not affect the correctness. 
Let $p$ and $q$ be the vertices of two convex polygons $P_i$ and $P_j$ respectively.
Suppose that $pq$ is a line segment belonging to a shortest path $R$ between vertices $s$ and $t$ of the spanner computed in \cite{conf/stoc/Clarkson87}. 
Further, suppose that $p$ occurs before $q$ when $R$ is traversed from $s$ to $t$. 
If the line along $pq$ supports $P_i$ (resp. $P_j$) at $p$ (resp. $q$), then the line segment $pq$ belongs to an admissible cone at $p$ (resp. $q$).
Otherwise, there exists a line segment in the admissible cone with apex either at a vertex of $P_i$ or at a vertex of $P_j$ which would yield a shorter path from source $s$ to $q$ without using the line segment $pq$.
\end{proof}

Once we find a shortest path $SP_\Omega$ between $s$ and $t$ amid convex polygonal obstacles in $\Omega$ using the spanner $G$, following the proof of Lemma~\ref{lem:ppt}, we transform $SP_\Omega$ to a path amid obstacles in $\mathcal{P}$.
Since there are $O(h)$ obstacles in $\Omega$, $SP_\Omega$ contains $O(h)$ tangents between obstacles.
Let this set of tangents be $\cal{T}$.
We need to find points of intersection of convex polygons in $\cal{P}$ with the line segments in $\cal{T}$.
For any $l \in \cal{T}$ and $P_i \in \cal P$, by using the algorithm from Dobkin et~al.~\cite{journals/tcs/DobkinK83}, we compute the possible intersection between $l$ and $P_i$.
Whenever a line segment $l \in \cal{T}$ and a convex polygon $P_i \in \cal{P}$ intersect, say at points $p'$ and $p''$, we replace the line segment between $p'$ and $p''$ with the geodesic shortest path between $p'$ and $p''$ along the boundary of $P_i$.
Analogously, for every line segment $l \in SP_\Omega - \cal{T}$ belonging to an obstacle $P_j \in \Omega$, we replace $l$ with the corresponding geodesic path along the boundary of $P_j$.
We use the plane sweep technique \cite{books/compgeom/deberg2008} to determine whichever line segments in $\cal{T}$ could intersect with the convex obstacles in $\cal{P}$.
Essentially, the event handling procedures of plane sweep algorithm replace every line segment in $SP_\Omega$ that intersects with any obstacle $P_j \in \calP$ with the shortest geodesic shortest path along the boundary of $P_j$, so that the resulting shortest path $SP_\calP$ after all such replacements belongs to $\cal{F(P)}$.

As part of the plane sweep, a vertical line is swept from left-to-right in the plane.
Let $L$ (resp. $R$) be the set of leftmost (resp. rightmost) vertices of convex polygons in $\cal{P}$.
Initially, points in $L$ and $R$ together with the two endpoints of every line segment in $\cal{T}$ are inserted into the priority queue $Q$.
The event points are scheduled from $Q$ using their respective distances from the initial sweep line position.
As the events occur, the event points corresponding to $L, R$, and the endpoints of line segments in $\cal{T}$ are handled and are deleted from $Q$. 
The algorithm terminates whenever $Q$ is empty.
As described below, the intersection points between the line segments in $\cal{T}$ and the convex polygons in $\cal{P}$ are added to $Q$ with the traversal of the sweep line. 
The sweep line status is maintained as a balanced binary search tree $B$.
We insert (resp. delete) a pointer to a line segment in $\cal{T}$ or a pointer to a convex polygon in $\cal{P}$ to $B$ whenever leftmost (resp. rightmost) endpoint of it is popped from $Q$.
We note that before a line segment $l \in \cal{T}$ and $P \in \calP$ intersect, it is guaranteed that $l$ and $P$ occur adjacent along the sweep line. 
Hence, whenever $l$ and $P$ are adjacent in the sweep line status, we update the event-point schedule with the point of intersection between $l$ and $P$ that occurs first among all such points of intersection in traversing the sweep line from left to right. 
By using the algorithm from Dobkin et~al.~\cite{journals/tcs/DobkinK83}, we compute the possible intersection between $l$ and $P$.
If they do intersect, we push the leftmost point of their intersection to $Q$ with the distance from the initial sweep line as the priority of that event point.
Further, we store the rightmost intersection point between $l$ and $P$ with the leftmost point of intersection as satellite data.
If the leftmost intersection point between $l$ and $P$ pops from $Q$, we compute the geodesic shortest path along the boundary of $P$ between the leftmost intersection point and the corresponding rightmost intersection point.
Further, whenever $l$ and $P$ become non-adjacent along the sweep line, we delete their leftmost point of intersection from $Q$.

\begin{theorem}
\label{thm:spcvx}
Given a set $\mathcal{P}$ of pairwise disjoint convex polygons, two points $s, t \in \mathcal{F(P)}$, and $\epsilon \in (0, 0.6]$, computing a $(1+\epsilon)$-approximate geodesic distance between $s$ and $t$ takes $O(n+ \frac{h}{\epsilon}\lg{\frac{h}{\epsilon}})$ time.
Further, within an additional $O(h\lg{n})$ time, a $(1+\epsilon)$-approximate shortest path is computed.
\end{theorem}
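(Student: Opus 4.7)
My plan is to compose the approximation factors from Lemmas \ref{lem:sketchpt}, \ref{lem:ppt}, and \ref{lem:coneangle}, fix the slack parameter $\alpha$ so the final factor is exactly $1+\epsilon$, and then charge the running time of each phase of the algorithm.

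For the approximation, let $\pi_G$ denote the shortest path in the spanner $G$ between $s$ and $t$. Lemma \ref{lem:coneangle} gives $|\pi_G| \le (1+\alpha\epsilon)\,dist_{\Omega}(s,t)$, and Lemma \ref{lem:ppt} yields $dist_{\Omega}(s,t)\le dist_{\mathcal{P}}(s,t)$. Reusing the segment-by-segment argument from the proof of Lemma \ref{lem:sketchpt} on the tangent pieces of $\pi_G$, each sub-segment that cuts a patch of some $P_i \in \mathcal{P}$ can be replaced by the geodesic along $\partial P_i$ between its two intersection points at a cost of a $(1+\alpha\epsilon)$ factor; this produces a valid path $\pi_{\mathcal{P}}$ in $\mathcal{F(P)}$ of length at most $(1+\alpha\epsilon)^2 dist_{\mathcal{P}}(s,t)$. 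Choosing $\alpha\epsilon = \epsilon' = \sqrt{1+\epsilon}-1$ makes $(1+\alpha\epsilon)^2 = 1+\epsilon$, so $\pi_{\mathcal{P}}$ is a valid $(1+\epsilon)$-approximate shortest path and its length serves as the required distance estimate.

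For the running time of the distance computation, the sketch $\Omega$ is built in $O(n)$ time by a single counterclockwise pass over each convex polygon that closes a patch whenever the accumulated subtended angle exceeds $\sqrt{\epsilon'}$. With $\alpha\epsilon = \epsilon'$, Observation \ref{obs:coresetsize} and Lemma \ref{lem:numcones} bound both the number of coreset vertices and the number of admissible cones by $O(h/\sqrt{\epsilon'})$. For each of the $O(1/\sqrt{\epsilon'})$ cone orientations, the CVD construction of \cite{conf/stoc/Clarkson87} is executed by plane sweep in $O((h/\sqrt{\epsilon'})\lg(h/\sqrt{\epsilon'}))$ time and used to identify the closest Steiner point of every admissible cone of that orientation, yielding the $O(h/\sqrt{\epsilon'})$ spanner edges within the same bound. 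A single Dijkstra pass on $G$ contributes another $O((h/\sqrt{\epsilon'})\lg(h/\sqrt{\epsilon'}))$ term, summing to the claimed $O(n + (h/\sqrt{\epsilon'})\lg(h/\epsilon'))$.

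For the additional $O(h\lg n)$ path-construction cost I would invoke the plane-sweep routine described immediately before the theorem statement: $\pi_G$ contains $O(h)$ tangent segments, and each segment is swept against the obstacles of $\mathcal{P}$ using the Dobkin--Kirkpatrick \cite{journals/tcs/DobkinK83} hierarchy to decide, in $O(\lg n)$ time, whether any adjacent (segment, polygon) pair in the sweep-status intersects, and to extract the corresponding boundary geodesic between two intersection points also in $O(\lg n)$ time. The step I expect to be the main obstacle is showing that the number of adjacency changes along the sweep line is $O(h)$ so that the total cost really collapses to $O(h\lg n)$; this relies on the simplicity of $\pi_G$ and on the pairwise disjointness of the convex polygons of $\mathcal{P}$, which together imply that each tangent of $\pi_G$ enters each polygon at most once and becomes status-adjacent to only $O(1)$ polygons per event.
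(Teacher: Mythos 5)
Your overall route is the same as the paper's: compose Lemmas~\ref{lem:ppt}, \ref{lem:sketchpt} and \ref{lem:coneangle} to get a $(1+\alpha\epsilon)^2$ factor, set $\alpha\epsilon=\sqrt{1+\epsilon}-1$, and charge the sketch construction, the per-orientation sweeps, Dijkstra, and the final plane sweep with Dobkin--Kirkpatrick queries. Two places in your accounting do not close, though.

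First, the CVD phase. You charge each of the $O(1/\sqrt{\epsilon'})$ orientations $O\bigl((h/\sqrt{\epsilon'})\lg(h/\sqrt{\epsilon'})\bigr)$ and then assert the total is $O\bigl((h/\sqrt{\epsilon'})\lg(h/\epsilon')\bigr)$; taken literally your per-orientation costs sum to $O\bigl((h/\epsilon')\lg(h/\sqrt{\epsilon'})\bigr)$, which overshoots the claimed bound by a $1/\sqrt{\epsilon'}$ factor. The missing ingredient (which the paper states explicitly) is that for any fixed cone orientation, only $O(1)$ admissible cones per obstacle have that orientation --- this is essentially the directional argument inside Lemma~\ref{lem:numcones}. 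Hence each orientation's sweep involves only $O(h)$ apices and costs $O(h\lg h)$, and the total over all orientations is $O\bigl((h/\sqrt{\epsilon'})\lg h\bigr)$, which fits. Second, you propose to output the length of the converted path $\pi_{\mathcal{P}}$ as the distance estimate, but constructing $\pi_{\mathcal{P}}$ is exactly the part the theorem charges to the \emph{additional} $O(h\lg n)$; with your scheme the first time bound is not achieved. The fix is to answer the distance query from the spanner alone: since $dist_{\mathcal{P}}(s,t)/(1+\epsilon')\le dist_G(s,t)\le(1+\epsilon')\,dist_{\mathcal{P}}(s,t)$, returning $(1+\epsilon')\,dist_G(s,t)$ gives a value between $dist_{\mathcal{P}}(s,t)$ and $(1+\epsilon)\,dist_{\mathcal{P}}(s,t)$ without touching $\mathcal{P}$ again. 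Your concern about bounding the number of intersecting (tangent, obstacle) pairs by $O(h)$ in the final sweep is legitimate, but the paper asserts this at the same level of detail you do, so I do not count it against you.
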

\begin{proof}
From Lemma~\ref{lem:numptsketch}, we know that $dist_{\Omega}(s, t) \le dist_{\cal P}(s, t) \le (1+\alpha\epsilon) dist_{\Omega}(s, t)$.
Let $G$ be the spanner constructed.
From Lemma~\ref{lem:coneangle}, we know that $dist_{\Omega}(s,t) \le dist_{G}(s, t) \le (1+\sqrt{\alpha\epsilon})dist_{\Omega}(s, t)$.
As detailed in Lemma~\ref{lem:sketchpt}, algorithm transforms a shortest path between $s$ and $t$ in $G$ to a path $p$ in $\cal{F(P)}$.
Let $dist_{\calP}^{p}(s, t)$ be the distance along $p$.
From Lemma~\ref{lem:sketchpt}, $dist_{\calP}^{p}(s, t) \le (1+\alpha\epsilon)dist_{G}(s, t)$.
Hence, $dist_{\calP}^{p}(s, t) 
    \le (1+\alpha\epsilon)dist_{G}(s, t)
    \le (1+\alpha\epsilon)(1+\sqrt{\alpha\epsilon})dist_{\Omega}(s, t)
    \le (1+\alpha\epsilon)(1+\sqrt{\alpha\epsilon})dist_{\calP}(s, t)$.
Since $p$ is a path in $\cal{F(P)}$, it is immediate to note that $dist_{\cal P}(s, t) \le dist_{\calP}^{p}(s, t)$.
Therefore, $dist_{\cal P}(s, t) \le dist_{\calP}^{p}(s, t) \le (1+\alpha\epsilon)(1+\sqrt{\alpha\epsilon}) dist_{\cal P}(s, t)$.
To achieve $(1+\epsilon)$-approximation, $(1+\alpha\epsilon)(1+\sqrt{\alpha\epsilon})$ needs to be less than or equal to $(1+\epsilon)$.
For small values of $\epsilon$ ($\epsilon \in (0, 0.6]$), choosing $\alpha = \frac{\epsilon}{2}$ satisfies this inequality.

From here on, we denote $\alpha\epsilon$ with $\epsilon'$.
Finding the coreset $\cal{S}$ of vertices from the convex polygons in $\calP$, and computing the set $\Omega$ of core-polygons together takes $O(n)$ time.
The number of coreset vertices is $O(\frac{h}{\sqrt{\epsilon'}})$.
The number of cones per obstacle is $O(\frac{1}{\sqrt{\epsilon'}})$.
Therefore, the total number of cones is $O(\frac{h}{\sqrt{\epsilon'}})$.
For any cone $C \in \mathcal{C}$ and for any core-polygon $O \in \Omega$, at most a constant number of vertices of $O$ are apexes to cones that have the orientation of $C$.
Considering a sweep line in the orientation of $C$, the sweep line algorithm to find the closest Steiner point to the apex of each cone $C$ (whenever an obstacle intersects with $C$) takes $O(h\lg{h})$ time.
Hence, computing the set of closest Steiner points corresponding to all the cone orientations in $\mathcal{C}$ together take $O(\frac{h}{\sqrt{\epsilon'}}\lg{h})$.

The number of nodes in the spanner $G$ is $O(\frac{h}{\sqrt{\epsilon'}})$.
These nodes include coreset vertices and at most one closest Steiner point per cone.
As each cone introduces at most one edge into $G$, the number of edges in $G$ is $O(\frac{h}{\sqrt{\epsilon'}})$.
Using the Fredman-Tarjan algorithm \cite{journals/jacm/FredmanT87}, finding a shortest path between $s$ and $t$ in $G$ takes $O(\frac{h}{\sqrt{\epsilon'}}\lg{\frac{h}{\sqrt{\epsilon'}}})$ time.
Hence, computing the $(1+\epsilon)$-approximate distance between $s$ and $t$ takes $O(n+\frac{h}{\sqrt{\epsilon'}}\lg{\frac{h}{\sqrt{\epsilon'}}})$ time.
For $\alpha = \frac{\epsilon}{2}$, the value of $\epsilon'$ is $O(\epsilon^2)$.
Hence, the result stated in the theorem statement.

For the plane sweep, leftmost and rightmost extreme vertices of convex polygons in $\cal{P}$ are found in $O(n)$ time.
There are $O(h)$ line segments in $\cal{T}$, cardinality of $\Omega$ is $O(h)$, and $O(h)$ line segment-obstacle pairs (respectively from $\cal{T}$ and $\cal{P}$) that intersect.
The number of event points due to the endpoints in sets $L, R$, and the endpoints of line segments in $\cal{T}$ is $O(h)$.
If $l$ and $P$ become non-adjacent along the sweep line, deleting their point of intersection from $Q$ is charged to the event that caused them non-adjacent.
The sweep line status is updated if any of these $O(h)$ number of event points occur.
Analogous to the analysis provided for line segment intersection \cite{books/compgeom/deberg2008}, our plane sweep algorithm takes $O(n+h\lg{h})$ time.  

Due to Dobkin et~al.~\cite{journals/tcs/DobkinK83}, determining whether a line segment $l$ in $SP_\Omega$ intersects with an obstacle $P$ takes $O(\lg{n})$ time, 
The preprocessing structures corresponding to \cite{journals/tcs/DobkinK83} take $O(n)$ space and they are constructed in $O(n)$ time.
Further, replacing every line segment between points of intersection with their respective geodesic shortest paths along the boundaries of obstacles together take $O(n)$ time.
\end{proof}

Note that the proof of the above theorem requires us to set the value of $\alpha$ to $\frac{\epsilon}{2}$.

\section{Approximate shortest path amid simple polygons}
\label{sect:polydom}

In this section, we extend the approximation method from previous sections to the case of simple (not necessarily convex) polygons.
This is accomplished by first decomposing $\cal{F}(\cal{P})$ into a set of corridors, funnels, hourglasses, and junctions \cite{conf/stoc/Kapoor99,conf/socg/Kapoor88,journals/dcg/KapoorMM97}. 
In the following, we describe these geometric structures, and then we detail our algorithm. 

For convenience, we assume a bounding box encloses the polygonal domain $\calP$.
In the following, we describe a coarser decomposition of $\cal{F(P)}$ as compared to the triangulation of $\cal{F(P)}$. 
In specific, this decomposition is used in our algorithm to achieve efficiency.
Let $\Tri(\calF)$ denote a triangulation of $\calF(\calP)$. 
The line segments of $\Tri(\calF)$ that are not the edges of obstacles in $\calP$ are referred to as {\it diagonals}.
Let $G(\calF)$ denote the dual graph of $\Tri(\calF)$, i.e., each node of $G(\calF)$ corresponds to a triangle of $\Tri(\calF)$ and each edge connects two nodes corresponding to two triangles sharing a diagonal of $\Tri(\calF)$.
Based on $G(\calF)$, we compute a planar 3-regular graph, denoted by $G_3$ (the degree of every node in $G_3$ is three), possibly with loops and multi-edges, as follows. 
First, we remove each degree-one node from $G(\calF)$ along with its incident edge; repeat this process until no degree-one node remains in the graph. 
Second, remove every degree-two node from $G(\calF)$ and replace its two incident edges by a single edge; repeat this process until no degree-two node remains. 
The resultant graph $G_3$ is planar, which has $O(h)$ faces, nodes, and edges.
Every node of $G_3$ corresponds to a triangle in $\Tri(\calF)$, called a {\it junction triangle}. 
The removal of all junction triangles results in $O(h)$ {\it corridors}.
The points $s$ and $t$ between which a shortest path needs to be computed are placed in their own degenerate single point corridors.    
The boundary of each corridor $C$ consists of four parts (see Fig.~\ref{fig:corridor}): 
(1) A boundary portion of an obstacle $P_i\in \calP$, from a point $a$ to a point $b$; 
(2) a diagonal of a junction triangle from $b$ to a point $e$ on an obstacle $P_j\in \calP$ ($P_i=P_j$ is possible); 
(3) a boundary portion of the obstacle $P_j$ from $e$ to a point $f$; 
(4) a diagonal of a junction triangle from $f$ to $a$.
The corridor $C$ is a simple polygon.
Let $\tau(a,b)$ (resp., $\tau(e,f)$) be the Euclidean shortest path from $a$ to $b$ (resp., $e$ to $f$) in $C$. 
The region $H_C$ bounded by $\tau(a,b), \tau(e,f)$, $\overline{be}$, and $\overline{fa}$ is called an {\it hourglass}, which is {\it open} if $\tau(a,b)\cap \tau(e,f)=\emptyset$ and {\it closed} otherwise. 
(Refer Fig.~\ref{fig:corridor}.)
If $H_C$ is open, then both $\tau(a,b)$ and $\tau(e,f)$ are convex polygonal chains and are called the {\it sides} of $H_C$; otherwise, $H_C$ consists of two {\it funnels} and a path $\tau_C=\tau(a,b)\cap \tau(e,f)$ joining the two apexes of the two funnels, and $\tau_C$ is called the {\it corridor path} of $C$.
Let $x$ and $y$ be the endpoints of $\pi_C$. 
Also, let $x$ be at a shorter distance from $b$ as compared to $y$.
The paths $\tau(b,x), \tau(e,x), \tau(a,y)$, and $\tau(f,y)$ are termed {\it sides of funnels} of hourglass $H_C$.
We note that these paths are indeed convex polygonal chains.
\begin{wrapfigure}{r}{0.6\textwidth}
\begin{minipage}[t]{\linewidth}
\vspace{-10pt}
\begin{center}
\includegraphics[totalheight=1.2in]{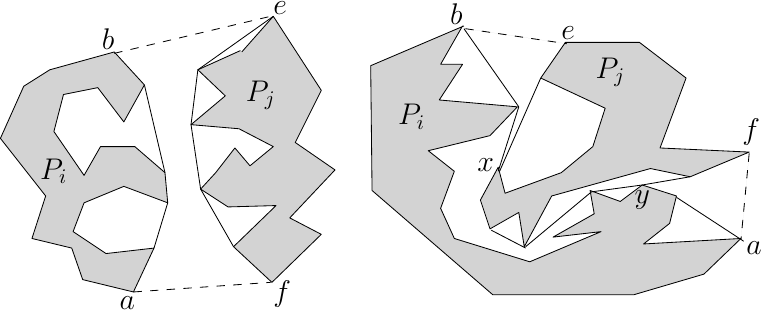}
\centering
\caption{\footnotesize Illustrating an open hourglass (left) and a
closed hourglass (right) with a corridor path connecting the apexes 
$x$ and $y$ of the two funnels. The dashed segments are diagonals.}
\vspace{-2pt}
\label{fig:corridor}
\end{center}
\end{minipage}
\end{wrapfigure}
The apieces $x$ and $y$ together is termed a {\it apex pair} of hourglass $H_C$.
Further, the shortest path between $x$ and $y$ along the boundary of $H_C$ is the {\it corridor path between apexes} of $H_C$.

We first give an overview of our algorithm for simple polygonal obstacles.
A sketch of $\cal P$ comprising of a sequence of convex polygonal (core-)chains is computed.
Each such core-chain either corresponds to an approximation of a side of an open hourglass or a side of a funnel.
If a simple polygon does not participate in any closed corridor, these polygonal chains together form a core-polygon.
Similar to the convex polygon case, each such polygonal chain is partitioned into patches.
Using these chains, we compute a spanner $G$.
In addition, the following set of edges are included in $G$: for every closed hourglass $H_C$ and for each obstacle $P$ that participates in $H_C$, an edge representing the unique shortest path between the two apieces of $H_C$ (as detailed below).
After we compute a shortest path $p$ between $s$ and $t$ in the spanner, for every edge $e(r',r'') \in p$, if $e$ is an edge that corresponds to the closed corridor path then we replace $e$ with a shortest path (sequence of edges) between $r'$ and $r''$ in $\cal{F(P)}$.
The resultant path is the output of our algorithm.
The scheme designed in Agarwal et~al.~\cite{conf/soda/AgarwalSY09} does not appear to extend easily to the case of simple polygons as they use the critical step of computing partitioning planes between pairs of convex polygonal obstacles from $\cal{P}$.

For every obstacle $P_j \in \cal{P}$, let ${\cal R}_j$ be the union of the following: 
(i) the set comprising of open hourglass sides whose endpoints are incident to $P_j$,
and
(ii) the set comprising of sections of funnel sides whose non-apex endpoints incident to $P_i$.
Note that the elements of sets in (i) and (ii) are polygonal convex chains.
For every $R \in {\cal R}_j$, similar to the case of convex polygonal obstacles, we partition $R$ into patches and the set comprising of the endpoints of these patches is the coreset of $R$.
(For details, refer to Section~\ref{sect:r2nopreproc}.)
For every $R \in {\cal R}_j$, the {\it core-chain of $R$} is obtained by joining every two successive vertices that belong to the coreset of $R$ with a line segment while traversing the boundary of $R$.
We construct a spanner $G(V, E)$ that correspond to core-chains of $\cal{P}$ using $CVD$s.
For every admissible cone $C_p$ at every vertex $p$ of every core-chain, we consider $C_p$ only if $C_p$ has an intersection with $\cal{F(P)}$.
While noting that Clarkson's method extends to core-chains defined as above, the shortest path determination algorithm for simple polygons is the same as for the convex polygons described in the previous section except for the following.
For each apex pair $v'$-$v''$, an edge $e$ is introduced into $G$ between the vertices of $G$ that correspond to $v'$ and $v''$  with the weight of $e$ equal to the geodesic distance between $v'$ and $v''$ in the closed hourglass.
For a shortest path $p$ between any two nodes of $G$, for every edge $e \in p$ if both the endpoints of $e$ correspond to an apex pair $a'$-$a''$ then we replace $p$ with the shortest path between $a'$ and $a''$ so that that path contains the corridor path of that closed hourglass;
otherwise, as in Lemma~\ref{lem:sketchpt}, we replace the line segment $l$ correspond to $e$ with the sections of $l$ together with the geodesic paths along the boundaries of patches that $l$ intersects.
Thus a shortest path between $s$ and $t$ in the spanner $G$ is transformed to a path in the $\mathcal{F(P)}$.
In addition, since the distance along the path that contains the corridor path between every pair of apexes is made as the weight of its corresponding edge in the spanner, and due to Lemma~\ref{lem:coneangle}, the distance along the transformed path is a $(1+\alpha\epsilon)$-approximation to the distance between $s$ and $t$ amid obstacles in $\calP$.

\begin{lemma}
\label{lem:noncvxpathconv}
For a set ${\cal P}$ of $h$ pairwise disjoint simple polygons in $\mathbb{R}^2$ and two points $s, t \in \cal{F(P)}$, the sketch of $\mathcal{P}$ with cardinality $O(\frac{h}{\sqrt{\alpha\epsilon}})$ suffices to compute a $(1+\alpha\epsilon)$-approximate shortest path between $s$ and $t$ in $\mathcal{F(P)}$.
\end{lemma}

Computing hourglasses of $\cal{F(P)}$ using \cite{conf/stoc/Kapoor99,conf/socg/Kapoor88,journals/dcg/KapoorMM97} and determining the core-chains together takes $O(n+h(\lg{h})^{1+\delta}+h\lg{n})$ time (where $\delta$ is a small positive constant resulting from the triangulation of $\cal{F(P)}$ using the algorithm from \cite{journals/ijcga/Bar-YehudaC94}).
Extending the proof of Theorem~\ref{thm:spcvx} leads to the following.

\begin{theorem}
\label{thm:sppolydom}
Given a set $\mathcal{P}$ of pairwise disjoint simple polygonal obstacles, two points $s, t \in \mathcal{F(P)}$, and $\epsilon \in (0, 0.6]$, a $(1+\epsilon)$-approximate geodesic shortest path between $s$ and $t$ is computed in $O(n + h((\lg{n}) + (\lg{h})^{1+\delta} + (\frac{1}{\epsilon}\lg{\frac{h}{\epsilon}})))$ time.
Here, $\delta$ is a small positive constant (resulting from the time involved in triangulating $\cal{F(P)}$ using \cite{journals/ijcga/Bar-YehudaC94}).
\end{theorem}

Same as in Theorem~\ref{thm:spcvx}, the proof of this theorem also needs the value of $\alpha$ to be equal to $\frac{\epsilon}{2}$.

\section{Two-point approximate distance queries amid convex polygons}
\label{sect:r2preproc}

We preprocess the given set $\calP$ of convex polygons to output the approximate distance between any two query points located in $\cal{F(P)}$.
Like in the previous section, our preprocessing algorithm relies on \cite{conf/stoc/Clarkson87} and constructs a spanner $G$.
Our query algorithm constructs an auxiliary graph from $G$. 
We compute the approximate distance between the two query points using a shortest path finding algorithm in the auxiliary graph.

\subsection{Preprocessing}
\label{subsect:algodetailed}

The graph $G$ constructed as part of preprocessing in Section~\ref{subsect:gcon} is useful in finding an approximate Euclidean shortest path in $\cal{F(P)}$ between any two vertices in $\mathcal{P}$.
Instead of finding a shortest path between two query nodes in $G$, to improve the query time complexity, we compute a planar graph $G^{pl}(V, E^{pl})$ from $G(V, E)$ using the result from Chew~\cite{journals/jcss/Chew89}. 
Chew's algorithm finds a set $E^{pl} \subseteq E$ in $O(|V|\lg{|V|})$ time so that the distance between any two nodes of $G^{pl}$ is a $2$-approximation of the distance between the corresponding nodes in $G$.
We use the algorithm from Kawarabayashi et~al.~\cite{conf/icalp/KawarabayashiKS11} to efficiently answer $(1+\epsilon)$-approximate distance (length) queries in $G^{pl}$.
More specifically, \cite{conf/icalp/KawarabayashiKS11} takes $O(|V|(\lg{|V|})^2)$ time to construct a data structure of size $O(|V|)$ so that any distance query is answered in $O((\frac{\lg{|V|}}{\epsilon})^2)$ time.

\begin{lemma}
\label{lem:preprocapprx}
Let $G$ be the spanner computed for the polygonal domain $\Omega$ using the algorithm mentioned in Subsection~\ref{subsect:gcon}.
Let $s$ and $t$ be two points in ${\cal F(P)}$.
Let $G^{pl}$ be the planar graph constructed from $G$ using \cite{journals/jcss/Chew89}.
Further, let $dist_{K}(s, t)$ be the distance between $s$ and $t$ in $G^{pl}$ computed using the algorithm from \cite{conf/icalp/KawarabayashiKS11}. 
By choosing $\alpha = \frac{\epsilon}{12}$, $dist_{\cal P}(s, t) \le dist_K(s, t) \le (2+\epsilon) dist_{\cal P}(s, t)$.
\end{lemma}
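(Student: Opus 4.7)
The plan is a compositional approximation argument. The four construction stages each contribute a one-step inequality: (A) the sketch bound, obtained by combining Lemma~\ref{lem:sketchpt} and Lemma~\ref{lem:ppt}, which gives $dist_\Omega(s,t)\le dist_{\mathcal{P}}(s,t)\le(1+\alpha\epsilon)\,dist_\Omega(s,t)$; (B) the spanner bound of Lemma~\ref{lem:coneangle}, $dist_\Omega(s,t)\le dist_G(s,t)\le(1+\alpha\epsilon)\,dist_\Omega(s,t)$; (C) Chew's planar $2$-spanner guarantee, $dist_G(s,t)\le dist_{G^{pl}}(s,t)\le 2\,dist_G(s,t)$; and (D) the Kawarabayashi--Klein--Sommer oracle run with parameter $\epsilon_K$, $dist_{G^{pl}}(s,t)\le dist_K(s,t)\le(1+\epsilon_K)\,dist_{G^{pl}}(s,t)$.

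Chaining the left halves of (B)--(D) gives $dist_\Omega\le dist_K$, and combining with the left half of (A) yields $dist_{\mathcal{P}}(s,t)\le(1+\alpha\epsilon)\,dist_K(s,t)$; that is, reporting the scaled value $(1+\alpha\epsilon)\,dist_K$ (equivalently, absorbing this factor into the definition of $dist_K$) realises the lower bound asserted by the lemma. For the upper bound I would telescope the right halves of (D), (C), (B), and (A) in succession:
\begin{equation*}
(1+\alpha\epsilon)\,dist_K(s,t)\;\le\;(1+\alpha\epsilon)(1+\epsilon_K)\cdot 2\cdot(1+\alpha\epsilon)\,dist_\Omega(s,t)\;\le\; 2(1+\alpha\epsilon)^2(1+\epsilon_K)\,dist_{\mathcal{P}}(s,t).
\end{equation*}
Setting the three error parameters equal, $\alpha\epsilon=\alpha\epsilon=\epsilon_K=\epsilon''$ with $1+\epsilon'':=\bigl(\tfrac{2+\epsilon}{2}\bigr)^{1/3}$, collapses the product to $2(1+\epsilon'')^3=2+\epsilon$, which is exactly the cube-root expression for $\epsilon''$ announced in the contributions section and explains why the exponent $3$ appears there. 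Concretely I would pick $\alpha=\epsilon''/\epsilon$ in the sketch construction and invoke the oracle of Kawarabayashi et al.\ with approximation parameter $\epsilon''$.

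The step that requires the most care, and the only one that is not purely a matter of composing inequalities, is that $s$ and $t$ are query points and are not nodes of $G$ or $G^{pl}$, both of which are built in preprocessing on coreset and Steiner vertices only. I would handle this by reusing the device from Section~\ref{sect:r2nopreproc}: at query time treat $s$ and $t$ as degenerate point obstacles whose every incident cone is admissible, locate them in each cone Voronoi diagram via the planar point-location structures built during preprocessing, and attach an edge from $s$ (resp.\ $t$) to the closest Steiner point in every admissible cone. Since bounds (B)--(D) are purely graph-theoretic and the closest-Steiner-point analysis underlying Lemma~\ref{lem:coneangle} applies to an arbitrary query-time apex, this insertion preserves all four one-step bounds and the telescoped composition above yields the claimed $(2+\epsilon)$-approximation.
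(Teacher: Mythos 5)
Your proposal is correct and follows essentially the same route as the paper: the same four-stage chain of inequalities (sketch, Clarkson-style spanner, Chew's planar $2$-spanner, the Kawarabayashi--Klein--Sommer oracle) telescoped into a factor $2(1+\epsilon'')^3$ with $(1+\epsilon'')^3=\tfrac{2+\epsilon}{2}$, plus the same query-time attachment of $s$ and $t$ via cone Voronoi diagrams. You are in fact more careful than the paper on one point: since $\mathcal{F}(\mathcal{P})\subseteq\mathcal{F}(\Omega)$ the sketch inequality runs $dist_\Omega\le dist_{\mathcal{P}}\le(1+\alpha\epsilon)dist_\Omega$ (the paper's proof writes it in the reversed direction, contradicting its own Lemma~\ref{lem:ppt}), so the literal lower bound $dist_{\mathcal{P}}\le dist_K$ only holds after the $(1+\alpha\epsilon)$ rescaling you propose --- a legitimate fix that leaves the final $(2+\epsilon)$ factor intact.
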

\begin{proof}
From Lemma~\ref{lem:numptsketch}, we know that $dist_{\Omega}(s, t) \le dist_{\cal P}(s, t) \le (1+\alpha\epsilon) dist_{\Omega}(s, t)$.
From Lemma~\ref{lem:coneangle}, we know that $dist_{\Omega}(s,t) \le dist_{G}(s, t) \le (1+\sqrt{\alpha\epsilon})dist_{\Omega}(s, t)$.
Let $dist_{G^{pl}}(s, t)$ be the distance in $G^{pl}$ between nodes $s$ and $t$ of $G^{pl}$.
From \cite{journals/jcss/Chew89}, $dist_G(s, t) \le dist_{G^{pl}}(s, t) \le 2\hspace{0.02in}dist_G(s, t)$.
Further, as mentioned above, $dist_{G^{pl}}(s, t) \le dist_{K}(s, t)$ $\le (1+\alpha\epsilon)dist_{G^{pl}}(s, t)$.
As detailed in Lemma~\ref{lem:sketchpt}, algorithm transforms a shortest path between $s$ and $t$ in $K$ to a path $p$ in $\cal{F(P)}$.
Let $dist_{\calP}^{p}(s, t)$ be the distance along $p$.
From Lemma~\ref{lem:sketchpt}, $dist_{\calP}^{p}(s, t) \le (1+\alpha\epsilon)dist_K(s, t)$.
Hence, $dist_{\calP}^{p}(s, t)
    \le (1+\alpha\epsilon)dist_K(s, t)
    \le (1+\alpha\epsilon)^2dist_{G^{pl}}(s, t)
    \le 2(1+\alpha\epsilon)^2dist_G(s, t)
    \le 2(1+\alpha\epsilon)^2(1+\sqrt{\alpha\epsilon})dist_{\Omega}(s, t)
    \le 2(1+\alpha\epsilon)^2(1+\sqrt{\alpha\epsilon})dist_{\cal P}(s, t)
    $.
Since $p$ is a path in $\cal{F(P)}$, it is immediate to note that $dist_{\cal P}(s, t) \le dist_{\calP}^{p}(s, t)$.
Therefore, $dist_{\cal P}(s, t) \le dist_{\calP}^{p}(s, t) \le 2(1+\alpha\epsilon)^2(1+\sqrt{\alpha\epsilon}) dist_{\cal P}(s, t)$.
To achieve $(2+\epsilon)$-approximation, $(2)(1+\alpha\epsilon)^2(1+\sqrt{\alpha\epsilon})$ needs to be less than or equal to $(2+\epsilon)$.
For small values of $\epsilon$ ($\epsilon \in (0, 0.7]$), choosing $\alpha = \frac{\epsilon}{12}$ satisfies this inequality.
\end{proof}

We note that $\alpha\epsilon$ is $O(\epsilon^2)$.
We suppose that there are $O(\frac{1}{\epsilon})$ cones in $\cal{C}$, each cone with a cone angle $O(\epsilon)$.
It remains to describe data structures that need to be constructed during the preprocessing phase for obtaining the closest vertex of the query point $s$ (resp. $t$) in a given cone $C_s$ (resp. $C_t$).
To efficiently determine all these $O(\frac{1}{\epsilon})$ neighbors to $s$ and $t$ during query time, we construct a set of $O(\frac{1}{\epsilon})$ $CVD$s: for every $C \in \cal{C}$, one $CVD$ that corresponds to $C$.
The $CVD$s are constructed similarly to the algorithm given in Subsection~\ref{subsect:gcon}.

\begin{lemma}
The preprocessing phase takes $O(n+\frac{h}{\epsilon^2}(\lg{\frac{h}{\epsilon}})+\frac{h}{\epsilon}(\lg \frac{h}{\epsilon})^2)$ time.
The space complexity of the data structures constructed by the end of the preprocessing phase is $O(\frac{h}{\epsilon})$.
\end{lemma}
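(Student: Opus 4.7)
The plan is to decompose the preprocessing pipeline into four disjoint phases and bound the time/space of each, relying on earlier results for the sketch and spanner constructions and on the cited black-box results of Chew and Kawarabayashi et al. With $\epsilon'' = (\frac{2+\epsilon}{2})^{1/3}-1$ fixed, Observation~\ref{obs:coresetsize} gives $|{\cal S}| = O(\frac{h}{\sqrt{\epsilon''}})$, and Lemma~\ref{lem:numcones} gives that the total number of admissible cones across all obstacles of $\Omega$ is also $O(\frac{h}{\sqrt{\epsilon''}})$; these two facts drive all subsequent bounds.

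First, I would bound the sketch construction. Reading the $n$ polygon vertices, partitioning each convex boundary into patches of angular width $\sqrt{\epsilon''}$, extracting the coreset ${\cal S}$, and forming each corepolygon $Q_i = CH(S_i)$ take $O(n)$ time in total (a single sweep of each $P_i$ suffices since edges along $P_i$ are already sorted by orientation). Space for $\Omega$ is $O(\frac{h}{\sqrt{\epsilon''}})$.

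Second, I would bound the construction of the spanner $G(V,E)$ via the cone Voronoi diagrams. For each of the $O(\frac{1}{\sqrt{\epsilon''}})$ cone orientations in ${\cal C}$, a single sweep-line pass (as sketched in Section~\ref{subsect:gcon}) over the $O(\frac{h}{\sqrt{\epsilon''}})$ coreset apices and the $O(h)$ corepolygon edges builds $CVD_\psi$ in $O(\frac{h}{\sqrt{\epsilon''}}\lg\frac{h}{\sqrt{\epsilon''}})$ time; summing over all orientations yields $O(\frac{h}{\epsilon''}\lg\frac{h}{\sqrt{\epsilon''}})$ time. From the $CVD$s one reads off the closest Steiner point for each (apex, cone) pair in constant amortized time, adding the $O(\frac{h}{\sqrt{\epsilon''}})$ spanner edges (plus the two boundary edges per Steiner point described in Section~\ref{subsect:gcon}). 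Both $|V|$ and $|E|$ remain $O(\frac{h}{\sqrt{\epsilon''}})$.

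Third, I would invoke the two graph-theoretic black boxes in order. Applying Chew~\cite{journals/jcss/Chew89} to $G$ produces the planar $2$-spanner $G^{pl}$ in $O(|V|\lg|V|) = O(\frac{h}{\sqrt{\epsilon''}}\lg\frac{h}{\sqrt{\epsilon''}})$ time and $O(|V|) = O(\frac{h}{\sqrt{\epsilon''}})$ space. Preprocessing $G^{pl}$ with Kawarabayashi et al.~\cite{conf/icalp/KawarabayashiKS11} to support $(1+\alpha\epsilon)$-distance queries costs $O(|V|\lg^2|V|) = O(\frac{h}{\sqrt{\epsilon''}}\lg^2\frac{h}{\sqrt{\epsilon''}})$ time and $O(|V|) = O(\frac{h}{\sqrt{\epsilon''}})$ space. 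Adding the four phases,
\[
O(n) + O\!\left(\tfrac{h}{\epsilon''}\lg\tfrac{h}{\sqrt{\epsilon''}}\right) + O\!\left(\tfrac{h}{\sqrt{\epsilon''}}\lg\tfrac{h}{\sqrt{\epsilon''}}\right) + O\!\left(\tfrac{h}{\sqrt{\epsilon''}}\lg^2\tfrac{h}{\sqrt{\epsilon''}}\right),
\]
where the third term is absorbed by the fourth, gives exactly the claimed preprocessing time, and the maximum of the four space bounds is $O(\frac{h}{\sqrt{\epsilon''}})$.

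The only non-routine step is verifying that the per-orientation sweep-line really does run in $O(|V|\lg|V|)$ time when the apices of interest are restricted to admissible cones: because Lemma~\ref{lem:numcones} guarantees that each corepolygon contributes $O(1)$ apices of a given orientation, the sweep-line status maintains $O(h)$ active segments, and the standard $CVD$ argument of \cite{conf/stoc/Clarkson87} applies verbatim with $|V| = O(\frac{h}{\sqrt{\epsilon''}})$ in place of $n$. Once that is in place, the remainder is a direct summation.
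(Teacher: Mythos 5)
Your decomposition (sketch, CVD/spanner, Chew, Kawarabayashi et al.) and the per-phase time/space bounds match the paper's proof essentially verbatim, so the argument is correct and follows the same route. The one small item the paper adds that you omit is explicitly accounting for the $O(\frac{h}{\sqrt{\epsilon''}})$ space of the persistent $CVD$s together with Kirkpatrick's planar point-location structures over them, which must survive preprocessing for the query phase; since each $CVD_\psi$ has $O(h)$ cells and there are $O(\frac{1}{\sqrt{\epsilon''}})$ orientations, this does not change the final bound.
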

\begin{proof}
Computing the sketch $\Omega$ from the given $\calP$ takes $O(n+\frac{h}{\epsilon})$ time.
The number of cones in all the $CVD$s together is $O(\frac{h}{\epsilon})$.
It takes $O(\frac{1}{\epsilon} \frac{h}{\epsilon} \lg{\frac{h}{\epsilon''}})$ time to compute $G$ which include computing $CVD$s.
Due to \cite{journals/jcss/Chew89}, computing planar graph $G^{pl}$ with $O(\frac{h}{\epsilon})$ nodes takes $O(\frac{h}{\epsilon}\lg{\frac{h}{\epsilon}})$ time.
Computing space-efficient data structures using \cite{conf/icalp/KawarabayashiKS11} takes $O(\frac{h}{\epsilon''} (\lg{\frac{h}{\epsilon}})^2)$ time.
Hence, the preprocessing phase takes $O(n+\frac{h}{\epsilon''}\lg{\frac{h}{\epsilon}}+\frac{h}{\epsilon}((\lg{\frac{h}{\epsilon}})^2)$ time. 
Further, data structures constructed using \cite{conf/icalp/KawarabayashiKS11} by the end of preprocessing phase occupy $O(\frac{h}{\epsilon})$ space.
The Kirkpatrick's point location \cite{journals/siamcomp/Kirkpatrick83} data structures for planar point location take $O(\frac{h}{\epsilon})$ space.
\end{proof}

\subsection{Shortest distance query processing}
\label{subsect:queryalgo}

The query algorithm finds the obstacle-avoiding Euclidean shortest path distance between any two given points $s, t \in \mathcal{F(P)}$. 
We construct a graph $G_{st}$ from $G^{pl}$. 
(The graph $G^{pl}$ is as defined in Subsection~\ref{subsect:algodetailed}.)
For every $C \in \mathcal{C}$, if the point $s$ is located in the cell of a point $p$ of $CVD$ corresponding to $C$, then we introduce a node corresponding to $p$ into a set $V_s$.
(Essentially, $p$ is the closest visible point in cone $-C_s$ to point $s$.)
Analogously, we define the set $V_t$ of nodes for $t$ in $G_{st}$.
The node set of $G_{st}$ comprises of nodes in $V_s \cup V_t \cup \{s, t\}$.
The edges of this graph are of three kinds: $\{s\} \times V_s, V_s \times V_t$ and $\{t\} \times V_t$.
Since there are $O(\frac{1}{\epsilon})$ CVDs, the number of nodes and edges of $G_{st}$ are respectively $O(\frac{1}{\epsilon})$ and $O(\frac{1}{\epsilon^2})$.
For every edge $(s, s')$ (resp. $(t, t')$) with $s' \in V_s$ (resp. $t' \in V_t$), the weight of edge $(s, s')$ (resp. $(t, t')$) is the Euclidean distance between $s$ and $s'$ (resp. $t$ and $t'$). 
For every edge $(s', t')$ with $s' \in V_s$ and $t' \in V_t$, the weight of $(s', t')$ is the $(2+\epsilon)$-approximate distance between $s'$ and $t'$.
These weights are obtained from the data structures maintained as in \cite{conf/icalp/KawarabayashiKS11}.
We use Fredman-Tarjan algorithm \cite{journals/jacm/FredmanT87} to find a shortest path between $s$ and $t$ in $G_{st}$.
From the above, this distance is a $(2+\epsilon)$-approximate distance from $s$ to $t$ amid convex polygons in $\cal{P}$.

\begin{theorem}
\label{thm:distq}
Given a set $\mathcal{P}$ of $h$ pairwise disjoint convex polygonal obstacles in plane defined with $n$ vertices and $\epsilon \in (0, 0.6]$, the polygons in $\calP$ are preprocessed in $O(n+\frac{h}{\epsilon^2}(\lg{\frac{h}{\epsilon}})+\frac{h}{\epsilon}(\lg \frac{h}{\epsilon})^2)$ time to construct data structures of size $O(\frac{h}{\epsilon})$ for answering two point $(2+\epsilon)$-approximate distance query between any two given points belonging to $\cal{F(P)}$ in $O(\frac{1}{\epsilon^6}(\lg {\frac{h}{\epsilon}})^2)$ time.
\end{theorem}

\ignore {
\section{Conclusions}
\label{sect:conclu}

The three algorithms presented rely on computing a sketch of the polygonal domain by extracting coresets of the polygonal input obstacles and further computing core-polygons from these coresets.
Our approximation algorithm for answering the two-point distance queries amid convex polygons improves the $(6+\epsilon)$ stretch of an algorithm given for this problem in \cite{conf/soda/Chen95} to $(2+\epsilon)$ when the polygonal obstacles are convex. 
Significantly, in contrast to \cite{conf/soda/AgarwalSY09}, our algorithm for convex polygonal obstacles extends to the case of computing a $(1+\epsilon)$-approximate shortest path between two points when the polygonal obstacles in $\calP$ are not necessarily convex.
}

\subsection*{Acknowledgements}
R. Inkulu's research is supported by NBHM grant 248(17)2014-R\&D-II/1049 and SERB MATRICS grant MTR/2017/000474.

\bibliographystyle{plain}


\pagebreak

\pagestyle{empty}

\vspace{-1in}


\begin{landscape}

\small{}

\subsection*{Appendix}

\vspace{0.1in}

Comparison with previous results:

\vspace{0.1in}

\scriptsize{}

\begin{tabular}{|c|c|c|c|c|c|c|}
    \hline
    & Preprocessing time & Space & Query time & Time & Stretch & Comment \\
    \hline
Our results & - & - & - & $O(n + h((\lg{n}) + (\lg{h})^{1+\delta} + \frac{1}{\epsilon}\lg{\frac{h}{\epsilon}}))$ & $1+\epsilon$ & non-convex \\
    \hline
    & - & - & - & $O(n+ \frac{h}{\epsilon}\lg{\frac{h}{\epsilon}})$ & $1+\epsilon$ & convex  \\
    \hline
    & $O(n+\frac{h}{\epsilon^2}(\lg{\frac{h}{\epsilon}})+\frac{h}{\epsilon}(\lg\frac{h}{\epsilon})^2)$ & $O(\frac{h}{\epsilon})$ & $O(\frac{1}{\epsilon^6}(\lg{\frac{h}{\epsilon}})^2)$ & - & $2+\epsilon$ & convex \\
    \hline
Agarwal et~al.~\cite{conf/soda/AgarwalSY09} & - & - & - & $O(n+\frac{h}{\sqrt{\epsilon}}\lg(\frac{h}{\epsilon}))$ & $1+\epsilon$ & convex \\
    \hline
Chiang \& Mitchell~\cite{conf/soda/ChiangM99} & - & $O(n^{5+\epsilon})$ & $o(n)$ & - & optimal & non-convex \\
    \hline
    & - & $O(n^{5+10\delta+\epsilon})$ & $O(n^{1-\delta} \lg{n})$ & - & optimal & non-convex \\
    \hline
    & - & $O(n^{10}\lg{n})$ & $O((\lg{n})^2)$ & - & optimal & non-convex\\
    \hline
    & - & $O(n^{11})$ & $O(\lg{n})$ & - & optimal & non-convex \\
    \hline
    & - & $O(n^{5})$ & $O(\lg{n}+\min{h_s,h_t})$ & - & optimal & non-convex \\
    \hline
    & - & $O(n+h^{5})$ & $O(h\lg{n})$ & - & optimal & non-convex \\
    \hline
Chen~\cite{conf/soda/Chen95} & - & $O(n\lg{n} + \frac{n}{\epsilon})$ & $O(\frac{\lg{n}}{\epsilon}+\frac{1}{\epsilon^2})$ & - & $6+\epsilon$ & non-convex \\
    \hline
Arikati et~al.~\cite{conf/esa/ArikatiCCDSZ96} & $O(\frac{n^2}{\sqrt{r}})$ & $O(\frac{n^2}{\sqrt{r}})$ & $O(\lg{n} + \sqrt{r})$ & - & $\sqrt{2} + \epsilon$ & non-convex \\
    \hline
     & $O(n \lg{n})$ & $O(n)$ & $O(n)$ & - & $\sqrt{2} + \epsilon$ & non-convex \\
    \hline
     & $O(n^{3/2})$ & $O(n^{3/2})$ & $O(\lg{n})$ & - & $2\sqrt{2} + \epsilon$ & non-convex \\
    \hline
     & $O(\frac{n^{3/2}}{\sqrt{\lg{n}}})$ & $O(n\lg{n})$ & $O(\lg{n})$ & - & $3\sqrt{2} + \epsilon$ & non-convex \\
    \hline
\end{tabular}

\vspace{0.3in}

In Chiang and Mitchell~\cite{conf/soda/ChiangM99}, $h_s$ (resp. $h_t$) is the number of vertices visible from $s$ (resp. $t$).
Both the Chen~\cite{conf/soda/Chen95} and Arikati et~al.~\cite{conf/esa/ArikatiCCDSZ96} output a shortest path in additional $O(L)$ time, where $L$ is the number of edges of the output path.
The $r$ in Arikati et~al. [2] is an arbitrary integer such that $1 \le r \le n$.

\normalsize{}

\end{landscape}

\end{document}